\documentclass[12pt]{article}

\usepackage{amsmath,amssymb,amsfonts,mathrsfs,mathtools,dsfont,mathrsfs, amsthm, bm, bbm}
\DeclareMathAlphabet{\mathpzc}{OT1}{pzc}{m}{it}
\DeclareSymbolFontAlphabet{\amsmathbb}{AMSb}%

\usepackage{bm,bbm,dsfont,epsfig,rotating,setspace,latexsym,amsmath,epsf,amsthm,amssymb,amsfonts,epstopdf,graphicx}

\usepackage{cite,authblk}
\usepackage[dvipsnames]{xcolor}

\usepackage[normalem]{ulem}
\usepackage{comment}

\usepackage[dvipsnames]{xcolor}
\usepackage[font=small]{caption}
\usepackage[caption=false, font=small]{subfig}
\usepackage[hyphens]{url}
\usepackage[breaklinks, colorlinks, linkcolor=MidnightBlue, anchorcolor=MidnightBlue, citecolor=MidnightBlue, urlcolor=MidnightBlue]{hyperref}
\usepackage{subcaption}

\usepackage{array}

\allowdisplaybreaks

\usepackage[dvipsnames]{xcolor}
\usepackage{enumerate,enumitem}
\usepackage{csvsimple,booktabs}
\usepackage{float,subfig}
\usepackage{graphicx,epsfig,xcolor}
\usepackage{multirow}

\usepackage{fullpage}

\usepackage{ebgaramond}
\usepackage[OT1]{fontenc}
\usepackage[utf8]{inputenc}

\usepackage{thmtools}
\declaretheorem{theorem}

\declaretheorem{assumption}

\declaretheorem[sibling=theorem]{lemma}

\newcommand{\beq}{\begin{equation}}
\newcommand{\eeq}{\end{equation}}
\newcommand{\beqa}{\begin{eqnarray}}
\newcommand{\eeqa}{\end{eqnarray}}
\newcommand{\beqan}{\begin{eqnarray*}}
\newcommand{\eeqan}{\end{eqnarray*}}

\newcommand\raiseT[2]{\raisebox{0.25ex}{$#1#2$}
}

\newcommand{\argmax}{\mathop{\rm argmax}}

\newcounter{l1}
\newcounter{l2}
\newcounter{l3}
\newcommand{\bdotlist}{\begin{list}{$\bullet$}{}}
\newcommand{\bboxlist}{\begin{list}{$\Box$}{}}
\newcommand{\bbboxlist}{\begin{list}{\raisebox{.005in}{{\tiny
$\blacksquare$ \ \ }}}{}}
\newcommand{\bdashlist}{\begin{list}{$-$}{} }
\newcommand{\blist}{\begin{list}{}{} }
\newcommand{\barablist}{\begin{list}{\arabic{l1}}{\usecounter{l1}}}
\newcommand{\balphlist}{\begin{list}{(\alph{l2})}{\usecounter{l2}}}
\newcommand{\bAlphlist}{\begin{list}{\Alph{l2}.}{\usecounter{l2}}}
\newcommand{\bdiamlist}{\begin{list}{$\diamond$}{}}
\newcommand{\bromalist}{\begin{list}{(\roman{l3})}{\usecounter{l3}}}

\newtheorem{remark}{Remark}

\renewcommand{\hat}{\widehat}

\renewcommand{\tilde}{\widetilde}
\renewcommand{\top}{{\mathpalette\raiseT\intercal}}

\usepackage{algorithm}
\usepackage{algpseudocode}
\newcommand{\Rmax}{R_{\max}}

\title{Adaptive Agent Design}

\author{Raj Kiriti Velicheti \qquad Subhonmesh Bose \qquad Tamer Ba\c{s}ar
\thanks{All authors are affiliated with the Department of Electrical and Computer
Engineering and the Coordinated Science Laboratory at the University of Illinois,
Urbana-Champaign, Urbana, IL~61801. Emails: \texttt{\{rkv4, boses, basar1\}@illinois.edu}.
This work was partially supported by the U.S. National Science Foundation under grant
number NSF-ECCS-2349418 and by the U.S. Army Research Office under grant number
W911NF-24-1-0085.}}

\begin{document}

\maketitle

\begin{abstract}
We consider an agent acting against a general non-Markovian environment. The agent maintains its agent states, but is free to choose a transition kernel across those states and optimize its state-feedback control policies. We study the bi-level agent design problem that optimizes the transition kernel and the policy it induces, given said kernel with offline data of observations and actions obtained via a behavioral policy. For general environments, we show that a soft $Q$-learning algorithm converges almost surely to the fixed point of a soft Bellman equation defined by the stationary averages that the behavioral policy and the chosen kernel induce, and we delineate what separates the resulting policy from an optimal one. In partially observed Markov decision problems, we analyze convergence properties of parametrized transition kernel design via zero-th order and Bayesian optimization techniques.
\end{abstract}

\section{Introduction}

Intelligent decision-making systems are ubiquitous: from autonomous vehicles navigating
complex traffic to language-model-based agents that invoke tools in open-ended software
environments. Abstracting any such system as an \emph{agent} that interacts with
everything outside it---the \emph{environment}---the agent must assimilate observations
and produce actions that yield favorable outcomes.

A classical approach is to model the environment as a Markov Decision Process (MDP),
explicitly encoding all decision-relevant information into a \emph{state} variable.
MDPs have driven considerable progress in control and reinforcement learning. Their
central challenge, however, is state design: the state must capture \emph{all}
information relevant to future decisions, which is rarely obvious in practice \cite{yuksel2024stochastic}. In
traditional control theory this challenge is addressed through observability and
filtering---from the Kalman filter to particle filters---but these require the
practitioner to specify the state structure in advance. In richer domains such as
language modeling or robotics, the ``right'' state representation is far less transparent.

A more realistic stance is to acknowledge that the environment may be arbitrarily
complex and non-Markovian, and to design agents that learn to summarize their
interaction history into a compact internal state \emph{on the fly}. One principled
formulation of this idea is the \emph{simple agent, complex environment} framework
of~\cite{JMLR:v23:21-0773}, later cast as reinforcement learning in non-Markovian
environments by~\cite{chandak2024reinforcement}. Related work on partially observed
Markov decision processes (POMDPs)~\cite{sinha2024agent,anjarlekarscalable} provides
sample-complexity bounds on the degradation of decision quality as available history
shrinks.

More broadly, these works are related to learning in partially observed environments where the agent is assumed to act without completely knowing the true underlying state. A long line of literature formalizes such settings as POMDPs \cite{aastrom1965optimal}. For discrete spaces, \cite{kaelbling1998planning} formalizes the complexity of finding solutions in these environments. In
 \cite{pineau2003point},  a point based value iteration is proposed to approximately solve these problems in belief space. Another line of work in this direction is utilizing what is known as \emph{Approximate Information State}~\cite{subramanian2022approximate} which proposes a theoretical approach on what a notion of a state should satisfy in partially observed environments that would make it sufficient for acting optimally. See \cite{yuksel2024stochastic} for further literature.

While these works establish a theoretical basis for acting in general environments,
they leave open the question of \emph{how} an agent should adapt its state abstraction
during interaction. In this paper, we address that question directly. We formulate the agent design problem
as an optimization problem over a parametric family of state-transition kernels
$f_\theta$, derive an iterative procedure that alternates between inner policy
optimization and outer adaptation of $\theta$, and study two concrete instantiations of
the outer step: zeroth-order optimization (ZOO) and Bayesian optimization (BO).
We prove that (i) the outer objective $V_D(\theta)$ has a Lipschitz-continuous gradient,
(ii) the ZOO variant converges to a first-order stationary point at rate
$\mathcal{O}(1/\sqrt{K})$, and (iii) the BO variant with a convex-combination
parametrization converges to a global optimum asymptotically.

\section{A Simple Agent in a Non-Markovian Environment}
\label{sec:setup}

Let $\mathcal{A}$ and $\mathcal{O}$ denote finite action and observation spaces,
respectively. The environment is characterized by a stochastic kernel $\rho$ that
generates the next observation
\begin{align*}
    o_{t+1} \;\sim\; \rho\!\left(\cdot \;\middle|\; o^t, a^t\right),
\end{align*}
where $o^t = (o_t,o_{t-1},\dots)$ and $a^t = (a_t, a_{t-1}\dots)$ denote the full histories of
observations and actions up to time $t$. {We assume that the observation process has been evolving for a long enough time i.e., $t>-\infty$. } This formulation allows for \emph{arbitrary}
temporal dependence and does not require the environment to be Markovian.

The agent maintains an internal state $s_t$ taking values in a finite set $\mathcal{S}$.
Rather than assuming the state-update rule to be given or fixed but unknown, we
\emph{parametrize} the update by $\theta \in \Theta \subset \mathbb{R}^d$ and write
\begin{equation}\label{eq:state_update}
  s_{t+1} \;\sim\; f_\theta\!\left(\cdot \;\middle|\; s_t,\, a_t,\, o_{t+1}\right),
\end{equation}
where $f_\theta(\cdot|s,a,o) \in \Delta(\mathcal{S})$ is a stochastic transition kernel.
Equation~\eqref{eq:state_update} is the agent's mechanism for compressing its full
interaction history into a finite state; the quality of this compression depends
entirely on the choice of $\theta$.

Given internal state $s_t$, the agent selects actions according to a policy
$\pi : \mathcal{S} \to \Delta(\mathcal{A})$ and receives reward $r(s_t, a_t, o_{t+1})$,
where we write $|r(\cdot)| \le \Rmax$ for some finite $\Rmax > 0$.
The agent's goal is to maximize the expected discounted cumulative reward. Since this
objective depends on \emph{both} the transition parameter $\theta$ and the policy $\pi$,
we write
\begin{align}\label{eq:objective}
  \max_{\theta \in \Theta,\;\pi}\;
  V(f_\theta, \pi)
  \;=\;
  \mathbb{E}_{\rho, f_\theta, \pi}\left[\sum_{t=0}^{\infty} \gamma^t r(s_t, a_t, o_{t+1})\right],
\end{align}
where $\gamma \in (0,1)$ is a discount factor. Problem~\eqref{eq:objective} is a
\emph{bilevel} optimization problem: at optimality the policy must be optimal \emph{given}
$\theta$. Denoting this best-response policy $\pi^\star_\theta$, substituting it back
yields the outer objective
\begin{align}\label{eq:outer_obj}
  V_D(\theta)
  \;:=\;
  V\!\left(f_\theta,\, \pi^\star_\theta\right)
  \;=\;
  \mathbb{E}_{\rho,\,f_\theta,\,\pi^\star_\theta}\!
  \left[\sum_{t=0}^{\infty}\gamma^t\, r(s_t, a_t, o_{t+1})\right].
\end{align}
Optimizing $V_D(\theta)$ over $\theta$ is the central problem of this paper. We address
it in Section~\ref{sec:algorithm}.

\section{Generic Adaptive Agent Design}
\label{sec:algorithm}

Directly optimizing $V_D(\theta)$ faces two obstacles. First, obtaining
$\pi^\star_\theta$ for a given $f_\theta$ requires solving a reinforcement learning
problem in a non-Markovian environment, for which there is no closed-form solution.
Second, the gradient of $V_D$ with respect to $\theta$ is not analytically accessible
because the environment $\rho$ is treated as a black box.

Our approach resolves both difficulties as follows. Fix a behavioral policy
$\pi_b : \mathcal{S} \to \Delta(\mathcal{A})$ and collect a dataset $D$ of
observation-action-observation triples by rolling $\pi_b$ out in the environment long
enough for the induced process to mix. For a candidate $\theta$, we approximate
$\pi^\star_\theta$ by running soft $Q$-learning directly on the internal states generated
by $f_\theta$, using $D$ as the replay source. Note that this might not result in the true optimal policy because of the non-Markovian environment but is motivated by the fact that the agent ignores the environment complexities and acts simple. The resulting policy $\widehat{\pi}_\theta$
is then deployed in the true environment to obtain a Monte Carlo estimate of
$V_D(\theta)$. Finally, an outer optimization routine updates $\theta$ using only these
function evaluations (no gradient of $V_D$ with respect to $\theta$ is assumed available). This
design is summarized in Algorithm~\ref{alg:main}.

Before stating the algorithm precisely, we record the three assumptions that underlie
the entire development.

\begin{assumption}[Compact and Convex]\label{ass:compact}
$\Theta \subset \mathbb{R}^d$ is compact, convex and $\Pi_\Theta$ denotes Euclidean projection
onto $\Theta$.
\end{assumption}

\begin{assumption}[Smooth internal kernel]\label{ass:smooth_kernel}
For all $\theta \in \Theta$: (i) $f_\theta(s'|s,a,o) > 0$ for all $(s',s,a,o)$; and
(ii) $\theta \mapsto f_\theta$ is twice continuously differentiable with uniformly
bounded first and second derivatives over $\Theta$.
\end{assumption}

\begin{assumption}[Ergodicity and coverage]\label{ass:ergodic}
The behavioral policy $\pi_b$, together with $f_\theta$ and the environment $\rho$,
induces a stationary and ergodic process over $(s_t, a_t, o_{t+1})$ for every
$\theta \in \Theta$. Moreover, every state-action pair is visited:
$\tilde{\pi}(s,a) = P(s_t{=}s, a_t{=}a) > 0$ for all $(s,a) \in \mathcal{S}\times\mathcal{A}$.
\end{assumption}

\begin{algorithm}[t]
\caption{Iterative Agent Design}
\label{alg:main}
\begin{algorithmic}[1]
\Require Behavioral policy $\pi_b$; parametric kernel family $\{f_\theta\}_{\theta\in\Theta}$;
  reward $r$; discount $\gamma\!\in\!(0,1)$; temperature $\tau\!>\!0$;
  outer iterations $K$; evaluation budget $M$.
\Ensure Parameter $\theta_K$ and policy $\widehat\pi_{\theta_K}$.

\State \textbf{Collect data:} Roll out $\pi_b$ in the environment (allowing sufficient
  mixing) to obtain $D = \{(o_{t,i}, a_{t,i}, o_{t+1,i})\}_{i,t}$.
\State Initialize $\theta_0 \in \Theta$.

\For{$k = 0, 1, \dots, K-1$}
  \State \textbf{Policy Computation $\pi^\star_{\theta_k}$:}
    Run soft $Q$-learning with temperature $\tau$ on the internal-state sequence
    generated by $f_{\theta_k}$ using dataset $D$, until convergence to
    $\widehat{Q}_{\theta_k}$; set
    $\widehat\pi_{\theta_k}(a|s) \propto \exp(\widehat{Q}_{\theta_k}(s,a)/\tau)$.

  \State \textbf{Policy Evaluation:}
    Estimate $V_D(\theta_k)$ by rolling out $(f_{\theta_k}, \widehat\pi_{\theta_k})$
    in the true environment for $M$ episodes of horizon $H$.

  \State \textbf{Parameter Update:}
    $\displaystyle\theta_{k+1}
      = \textsc{StepOptimize}\!\left(
          V_D,\;\theta_k,\;
          \bigl\{(\theta_j,\, V_D(\theta_j))\bigr\}_{j \le k}
        \right)$
\EndFor

\State \Return $\theta_K$ and $\widehat\pi_{\theta_K}$.
\end{algorithmic}
\end{algorithm}

Two concrete implementations of \textsc{StepOptimize} are discussed below and
detailed in Algorithms~\ref{alg:zoo} and~\ref{alg:bo}.

\subsection{StepOptimize I: Zeroth-Order Optimization}
\label{sec:zoo}

When the gradient $\nabla_\theta V_D(\theta)$ is unavailable, we estimate it via a
two-point finite-difference scheme. At iteration $k$, draw a unit direction
$u_k \sim \mathrm{Uniform}(\mathbb{S}^{d-1})$, evaluate $V_D$ at $\theta_k \pm \delta_k u_k$,
and form the gradient estimate
\begin{equation}\label{eq:zo_grad}
  \hat{g}_k
  = \frac{d}{2\delta_k}\left(\widehat{V}_D(\theta_k + \delta_k u_k)
          - \widehat{V}_D(\theta_k - \delta_k u_k)\right)u_k.
\end{equation}
The iterate is updated as
$\theta_{k+1} = \Pi_\Theta(\theta_k + \eta_k \hat{g}_k)$,
where $\Pi_\Theta$ is Euclidean projection onto $\Theta$ and $\eta_k, \delta_k > 0$
are the step size and smoothing radius, respectively.

\begin{algorithm}[t]
\caption{\textsc{StepOptimize} via Zeroth-Order Optimization (ZOO)}
\label{alg:zoo}
\begin{algorithmic}[1]
\Require Current iterate $\theta_k$; objective oracle $V_D$; step size $\eta_k$;
  smoothing radius $\delta_k$; evaluation budget $M$.
\Ensure Updated parameter $\theta_{k+1}$.
\State Draw $u_k \sim \mathrm{Uniform}(\mathbb{S}^{d-1})$.
\State Estimate $\widehat{V}_+$ and $\widehat{V}_-$ by rolling out
  $(f_{\theta_k \pm \delta_k u_k},\, \widehat\pi_{\theta_k \pm \delta_k u_k})$
  for $M$ episodes each.
\State $\hat{g}_k = \dfrac{d}{2\delta_k}\left(\widehat{V}_+ - \widehat{V}_-\right)u_k$.
\State $\theta_{k+1} = \Pi_\Theta\!\left(\theta_k + \eta_k\, \hat{g}_k\right)$.
\State \Return $\theta_{k+1}$.
\end{algorithmic}
\end{algorithm}

\subsection{StepOptimize II: Bayesian Optimization}
\label{sec:bo}

When each evaluation of $V_D$ is expensive, Bayesian optimization (BO) is preferable:
it maintains a Gaussian Process (GP) surrogate fitted to all past evaluations and
selects the next query by maximizing an acquisition function, thus exploiting global
smoothness with far fewer function evaluations.

We place a GP prior on $V_D$ with a Mat\'{e}rn-$5/2$ covariance in the Euclidean metric
on $\Theta$. Writing $r = \|\theta - \theta'\|_2$,
\begin{equation*}
  \kappa(\theta,\theta')
  = \sigma_f^2
    \!\left(1 + \frac{\sqrt{5}\,r}{\ell} + \frac{5\,r^2}{3\ell^2}\right)
    \exp\!\left(-\frac{\sqrt{5}\,r}{\ell}\right),
\end{equation*}
with signal variance $\sigma_f^2$ and length scale $\ell$. Under the convex-combination
parametrization $f_\theta = \sum_{i=1}^N \theta_i f_i$ of
Lemma~\ref{lem:convex_kernel}, the domain $\Theta$ is the $(N-1)$-simplex, a compact
convex subset of $\mathbb{R}^N$ on which $\kappa$ is strictly positive definite. It is
the smoothness this kernel encodes, through $\nu = 5/2$, that
Theorem~\ref{thm:bo_convergence} exploits. At each iteration the next candidate is chosen
by maximizing the Upper Confidence Bound (UCB) acquisition:
$\theta_{k+1} = \argmax_{\theta\in\Theta}
  \bigl(\mu_k(\theta) + \beta_{k+1}\,\sigma_k(\theta)\bigr)$.

\begin{algorithm}[t]
\caption{\textsc{StepOptimize} via Bayesian Optimization (BO)}
\label{alg:bo}
\begin{algorithmic}[1]
\Require History $\mathcal{H}_{k} = \{(\theta_j,\widehat{V}_D(\theta_j))\}_{j \le k}$;
  kernel $\kappa$; observation noise variance $\sigma_\xi^2$;
  acquisition parameter $\beta_{k+1}$; evaluation budget $M$.
\Ensure Updated parameter $\theta_{k+1}$.
\State Let $\mathbf{y}_k = [\widehat{V}_D(\theta_j)]_{j \le k}$ and
  $[K_k]_{ij} = \kappa(\theta_i,\theta_j)$.
\State Compute $W_k = (K_k + \sigma_\xi^2 I)^{-1}$.
\State For each candidate $\theta$, let
  $\mathbf{k}_k(\theta) = [\kappa(\theta,\theta_j)]_{j\le k}$ and compute
  \vspace{-2pt}
  \begin{align*}
    \mu_k(\theta) &= \mathbf{k}_k(\theta)^\top W_k\,\mathbf{y}_k,\\
    \sigma_k(\theta) &= \sqrt{\kappa(\theta,\theta)
      - \mathbf{k}_k(\theta)^\top W_k\,\mathbf{k}_k(\theta)}\,.
  \end{align*}
\State $\theta_{k+1} = \argmax_{\theta \in \Theta}
  \bigl(\mu_k(\theta) + \beta_{k+1}\,\sigma_k(\theta)\bigr)$.
\State Estimate $\widehat{V}_D(\theta_{k+1})$ using $M$ rollout episodes.
\State Append $(\theta_{k+1}, \widehat{V}_D(\theta_{k+1}))$ to $\mathcal{H}_{k+1}$.
\State \Return $\theta_{k+1}$.
\end{algorithmic}
\end{algorithm}

\section{Convergence of Inner Soft $Q$-Learning}
\label{sec:inner}

Before analyzing the outer loop, we establish that the inner step of
Algorithm~\ref{alg:main} is well-founded: soft $Q$-learning converges almost surely
to a well-defined fixed point even though the internal state process is
\emph{not} Markovian (because $\rho$ has arbitrary memory).
The key insight, formalized below, is that the non-Markovian noise term vanishes in
expectation under the stationary measure induced by $\pi_b$, and thus the iteration
effectively tracks a Markovian ODE.

\begin{theorem}[Almost-sure convergence of soft $Q$-learning]
\label{thm:softq_convergence}
Let Assumption~\ref{ass:ergodic} hold and let the learning
rates $\{\alpha_n\}$ satisfy the Robbins--Monro conditions
$\sum_n \alpha_n = \infty$ and $\sum_n \alpha_n^2 < \infty$.
Define the stationary expected reward and transition induced by $\pi_b$ and $f_\theta$:
\begin{align}
  \bar{r}_\theta(s,a)
  &= \mathbb{E}_{\pi_b, f_\theta}\!\left[r(s,a,o)\right],
  \label{eq:rbar}\\
  \bar{P}_\theta(s'|s,a)
  &= \mathbb{E}_{\pi_b, f_\theta}\!\left[f_\theta(s'|s,a,o)\right].
  \label{eq:Pbar}
\end{align}
Then, the soft $Q$-learning iterates
\begin{align}\label{eq:softq_update}
\begin{split}
  Q_{n+1}(s,a)
  \;&=\; Q_n(s,a)
    + \alpha_n\, \mathbf{1}_{\{S_n=s,\,A_n=a\}}
    \!\Big[r_n\\
      &+ \gamma\,\tau\log\!\sum_{a'}\exp\!\tfrac{Q_n(S_{n+1},a')}{\tau}
      - Q_n(s,a)\Big]
\end{split}
\end{align}
converge almost surely to the unique fixed point $Q^\star_\theta$ satisfying
\begin{equation}\label{eq:soft_bellman}
  Q^\star_\theta(s,a)
  = \bar{r}_\theta(s,a)
    + \gamma\sum_{s'}\bar{P}_\theta(s'|s,a)\,
      \tau\log\!\sum_{a'}\exp\!\tfrac{Q^\star_\theta(s',a')}{\tau}.
\end{equation}
The induced soft-greedy policy is
$\widehat\pi_\theta(a|s) \propto \exp(Q^\star_\theta(s,a)/\tau)$.
\end{theorem}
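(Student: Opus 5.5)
We will view \eqref{eq:softq_update} as an asynchronous stochastic approximation driven by an ergodic but non-Markovian noise process, and compare it with the ODE $\dot Q = \tilde\pi\odot(\mathcal{T}_\theta Q - Q)$. Here $\tilde\pi(s,a)>0$ is the stationary occupancy from Assumption~\ref{ass:ergodic}, and
\begin{equation*}
(\mathcal{T}_\theta Q)(s,a)=\bar r_\theta(s,a)+\gamma\sum_{s'}\bar P_\theta(s'|s,a)\,\tau\log\sum_{a'}\exp\tfrac{Q(s',a')}{\tau}.
\end{equation*}
In \eqref{eq:rbar}--\eqref{eq:Pbar} we interpret the expectations as conditional on $(s_t,a_t)=(s,a)$ under the stationary law. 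Note that the \emph{stationary} $\bar P_\theta$ is the conditional law of $S_{n+1}$ given $(S_n,A_n)=(s,a)$, so it averages $f_\theta$ over the conditional distribution of $o_{t+1}$ given the past.

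\textbf{Step 1: contraction.} The log-sum-exp map $v\mapsto\tau\log\sum_{a'}e^{v_{a'}/\tau}$ is $1$-Lipschitz in $\|\cdot\|_\infty$, because its gradient is a softmax probability vector. Since each $\bar P_\theta(\cdot|s,a)$ is a probability vector, $\mathcal{T}_\theta$ is therefore a $\gamma$-contraction in sup-norm. Banach's fixed-point theorem then gives a unique fixed point $Q^\star_\theta$ of \eqref{eq:soft_bellman}.

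\textbf{Step 2: decomposition.} We write the increment as
\begin{equation*}
\mathbf 1_{\{S_n=s,A_n=a\}}[\,\cdots] = \tilde\pi(s,a)\big(\mathcal{T}_\theta Q_n - Q_n\big)(s,a) + \xi_n(s,a).
\end{equation*}
The noise $\xi_n$ is a function $h(Q_n, Z_n)$ of the current iterate and the window $Z_n=(S_n,A_n,o_{n+1},S_{n+1})$, minus its stationary mean. By stationarity, $\mathbb E[h(Q,Z_n)]$ equals the ODE drift for each fixed $Q$, so the noise has zero stationary mean. This is the ``vanishes in expectation'' observation.

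\textbf{Step 3: boundedness.} We show $\sup_n\|Q_n\|_\infty<\infty$ almost surely. Each coordinate update is a convex combination (for $\alpha_n\le 1$) of $Q_n(s,a)$ and a target bounded by $\Rmax+\gamma\|Q_n\|_\infty+\gamma\tau\log|\mathcal A|$. Induction then gives $\|Q_n\|_\infty\le \max\{\|Q_0\|_\infty,(\Rmax+\gamma\tau\log|\mathcal A|)/(1-\gamma)\}$.

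\textbf{Step 4: averaging of the noise (main obstacle).} Because $Z_n$ is not Markov, standard Poisson-equation arguments are unavailable. We will instead use the ODE method of Kushner--Yin with ergodic (rather than martingale-difference) noise. Its key requirement is the asymptotic rate-of-change condition: for each fixed $Q$ and $T>0$,
\begin{equation*}
\lim_n\sup_{m(n)\le m\le m(n,T)}\Big\|\sum_{i=n}^{m}\alpha_i\, \xi_i(Q)\Big\|=0 \quad\text{a.s.}
\end{equation*}
We plan to derive this from Birkhoff's ergodic theorem applied to the stationary ergodic sequence $h(Q,Z_i)$, using summation by parts with nonincreasing $\alpha_i$. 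Uniformity in $Q$ over the compact set from Step 3 follows because $h$ is Lipschitz in $Q$ with a constant independent of $Z$. We can then replace $Q_i$ by a frozen $Q$ on each ODE window, with an error of order $\alpha_i\cdot\text{window length}\to0$. If the summation-by-parts route proves delicate, a fallback is to assume (or extract from ergodicity) mixing coefficients strong enough for a martingale approximation.

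\textbf{Step 5: convergence.} The asynchronous, coordinatewise nature of the updates is absorbed by the positive diagonal weights $\tilde\pi(s,a)$; coverage from Assumption~\ref{ass:ergodic} makes every coordinate visited with positive frequency. The ODE $\dot Q=D(\mathcal{T}_\theta Q-Q)$, with $D=\mathrm{diag}(\tilde\pi)\succ 0$ and $\mathcal{T}_\theta$ a sup-norm contraction, has $Q^\star_\theta$ as a globally asymptotically stable equilibrium, with Lyapunov function $\|Q-Q^\star_\theta\|_\infty$ (Borkar--Soumyanath). With boundedness (Step 3) and noise averaging (Step 4), the Kushner--Clark/Borkar theorem yields $Q_n\to Q^\star_\theta$ almost surely. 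The statement about $\widehat\pi_\theta$ then follows by continuity of the softmax.
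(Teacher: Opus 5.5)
Your route is the paper's: it too splits the update into an averaged soft-Bellman drift plus a non-Markovian correction that integrates to zero under the stationary law, appeals to the ODE method, and gets global asymptotic stability of $Q^\star_\theta$ from the $\gamma$-contraction. Your Steps 1, 3 and 5 are correct and supply what the paper leaves implicit. In particular, Step 3 gives the a priori bound on $\|Q_n\|_\infty$ that the paper's ``bounded increments'' claim tacitly needs, and Step 5 gives an explicit Lyapunov function for $\dot Q=D(\mathcal{T}_\theta Q-Q)$.

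\textbf{The gap is Step 4.} This is also exactly where the paper's appeal to ``standard stochastic approximation theory'' does unearned work. Put $S_k=\sum_{i\le k}\xi_i(Q)$. On an ODE window of length $T$, summation by parts gives $|\sum_{i=n}^m\alpha_i\xi_i(Q)|\le \sup_{k\ge n-1}(|S_k|/k)\,(3\sup_j j\alpha_j+T)$. So Birkhoff's $S_k=o(k)$ suffices when $\sup_j j\alpha_j<\infty$ (e.g.\ $\alpha_n=1/n$), but not otherwise. For $\alpha_n=n^{-0.6}$, a window starting at $n$ spans about $Tn^{0.6}$ steps. You would then need averages over moving windows of sublinear length to vanish, which requires a rate in the ergodic theorem that ergodicity does not provide. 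For the same reason, mixing coefficients cannot be ``extracted from ergodicity''.

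\textbf{This is a defect of the statement, not of your technique.} With only Assumption~\ref{ass:ergodic} and Robbins--Monro steps, the conclusion can fail. Take $|\mathcal{S}|=|\mathcal{A}|=1$ and rewards $r_n=\pm1$ made of runs with i.i.d.\ fair signs and i.i.d.\ lengths satisfying $P(L\ge\ell)=\ell^{-\beta}$, $1<\beta<5/3$, started in stationarity. This process is stationary ergodic with $\bar r_\theta=0$ and $Q^\star_\theta=0$, and the iteration reduces to $Q_{n+1}=Q_n+\alpha_n(r_n-(1-\gamma)Q_n)$. Take $\alpha_n=n^{-0.6}$. Since $\sum_k k^{-0.6\beta}=\infty$, Borel--Cantelli gives infinitely many runs of length at least $Ck^{0.6}$ starting near time $k\,\mathbb{E}L$. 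Each such run covers an ODE window of length $T$ (growing with $C$) and drives $Q_n$ to within $O(e^{-(1-\gamma)T})$ of $\pm1/(1-\gamma)$, so $Q_n\not\to0$.

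\textbf{What is needed to close the proof.} You must add a hypothesis, in one of two forms:
\begin{itemize}
\item Nonincreasing $\alpha_n$ with $\sup_n n\alpha_n<\infty$. Then your Birkhoff argument goes through: apply it on a finite net of the compact set from Step 3, freeze $Q$ on short sub-windows, and use the uniform Lipschitz dependence of $h$ on $Q$.
\item A mixing-rate condition on the data process, tied to $\{\alpha_n\}$.
\end{itemize}
The paper's citation of Borkar does not fill this hole either. Those results treat martingale-difference noise and suitably regular Markov noise, not arbitrary stationary ergodic noise.
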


\begin{proof}
Decompose the update in ~\eqref{eq:softq_update} as
\begin{equation*}
  Q_{n+1}(s,a)
  = Q_n(s,a) + \alpha_n\!
    \left[F^{s,a}(Q_n) + \zeta^{s,a}_n + M^{s,a}_{n+1}\right],
\end{equation*}
where $F^{s,a}(Q)$ is the expected soft Bellman residual under
$(\bar{r}_\theta, \bar{P}_\theta)$:
\begin{align*}
     F^{s,a}(Q)
  &= \mathbf{1}_{\{S_n=s,A_n=a\}}
    \!\Big[
      \bar{r}_\theta(s,a)
      +\\ &\gamma\sum_{s'}\bar{P}_\theta(s'|s,a)\,\tau\log\!\sum_{a'}e^{Q(s',a')/\tau}
      - Q(s,a)
    \Big]; 
\end{align*}

$\zeta^{s,a}_n$ captures the non-Markovian correction
\begin{align*}
    \zeta^{s,a}_n
  &= \mathbf{1}_{\{S_n=s,A_n=a\}}
    \Big[
      \mathbb{E}[r_n | o^n, a^n] -\\
      & \bar{r}_\theta(s,a)
      + \gamma\!\sum_{s'}\!
        \Bigl(P(S_{n+1}{=}s'|o^n,a^n)
          - \\
          &\bar{P}_\theta(s'|s,a)\Bigr)
        \tau\log\!\sum_{a'}e^{Q_n(s',a')/\tau}
    \Big];
\end{align*}
and $M^{s,a}_{n+1}$ is a martingale difference sequence with bounded increments (since
rewards are bounded by $\Rmax$).

Under the stationary measure $\tilde\pi(s,a) > 0$
(Assumption~\ref{ass:ergodic}), integrating $\zeta^{s,a}_n$ over the history $y=(o^n,a^n)$
gives zero:
\begin{equation*}
  \int \tilde\pi(dy)\,\zeta^{s,a}(Q,y) = 0
  \quad \forall\,(s,a),
\end{equation*}
because $\bar{P}_\theta(s'|s,a)
= \mathbb{E}_{\tilde\pi}[P(S_{n+1}{=}s'|o^n,a^n) \mid S_n{=}s, A_n{=}a]$
by definition of the stationary averages in~\eqref{eq:Pbar}.
Hence, the non-Markovian noise vanishes in expectation under the stationary measure.
Standard stochastic approximation theory~\cite{borkar2008stochastic} then implies that
$\{Q_n\}$ almost surely tracks the ODE
$\dot Q = \sum_{s,a}\tilde\pi(s,a)\,F^{s,a}(Q)$.

The operator underlying $F^{s,a}$ is the soft Bellman operator with temperature $\tau>0$,
which is a $\gamma$-contraction in $\ell_\infty$; see, e.g.,~\cite{haarnoja2018soft}.
Hence the ODE is globally asymptotically stable at the unique fixed point $Q^\star_\theta$
of~\eqref{eq:soft_bellman}, giving $Q_n \xrightarrow{a.s.} Q^\star_\theta$.
\end{proof}

\begin{remark}\label{rem:nearopt}
Theorem~\ref{thm:softq_convergence} is a statement about the surrogate chain
$(\bar{r}_\theta, \bar{P}_\theta)$: the iteration converges, and $\widehat{\pi}_\theta$ is
soft-optimal for that chain. It does not by itself certify that $\widehat{\pi}_\theta$ is
near-optimal for the environment. Two gaps separate the
two. The first is the temperature, which biases the soft-optimal policy away from the
greedy one by at most $\tau\log|\mathcal{A}|/(1-\gamma)$ in value; this gap is under the
designer's control and is the price paid for the smoothness of
$\theta \mapsto \widehat{\pi}_\theta$ that Section~\ref{sec:theory} requires. The second is
representational, and is the essential one: $\widehat{\pi}_\theta$ acts on $s_t$ rather than
on the history, so it can be no better than the best policy the compression $f_\theta$
admits, and $(\bar{r}_\theta,\bar{P}_\theta)$ are themselves averages taken under $\pi_b$.
Bounding this second gap requires structure that a general non-Markovian environment does
not provide. When the environment is a POMDP and $f_\theta$ is a filter whose prediction
error contracts, filter-stability arguments bound the loss in value by the expected
filtering error accumulated over the effective horizon
$1/(1-\gamma)$~\cite{kara2023convergence, subramanian2022approximate}, and in that regime
$\widehat{\pi}_\theta$ is near-optimal in the usual sense.

This also locates the result relative to prior work. The closest antecedent
is~\cite{kara2023convergence}, which fixes a finite-window memory as the agent state and
establishes both convergence and, under filter stability, near-optimality of the learned
policy. Here the agent state is instead the output of a parametrized, \emph{designable}
kernel $f_\theta$, so the surrogate chain and hence the limit $Q^\star_\theta$ move with
$\theta$, and the argument must hold uniformly over the family
$\{f_\theta\}_{\theta\in\Theta}$; what the proof isolates is that the non-Markovian
correction $\zeta^{s,a}_n$ integrates to zero under the stationary measure for
\emph{every} $\theta$, and it is this uniformity that makes the outer problem well posed.
It is precisely because the representational gap depends on $\theta$, and is untouched by
the inner loop, that we treat the choice of $\theta$ as an optimization problem in its own
right.
\end{remark}

Theorem~\ref{thm:softq_convergence} establishes that the inner step of
Algorithm~\ref{alg:main} converges almost surely to a well-defined fixed point
$Q^\star_\theta$, yielding a policy $\widehat{\pi}_\theta$ with which the agent
can act in the non-Markovian environment for any fixed $\theta$. However, this
addresses only the inner problem, and the outer question (\emph{which $\theta$ to
use}) is still to be addressed. While Algorithm~\ref{alg:main} provides a general iterative
procedure for adapting $\theta$, without further structure on the environment
it is difficult to characterize the quality of the stationary points it reaches.
In what follows, we restrict the environment to be a POMDP
(Assumption~\ref{ass:pomdp}) and analyze the properties of the transition
parameters obtained by running Algorithm~\ref{alg:main} in this setting. 
\section{Theoretical Guarantees under a POMDP Environment}
\label{sec:theory}

The convergence results in Section~\ref{sec:inner} hold for an arbitrary non-Markovian
environment. To prove any structural properties for the stationary points of $V_D(\theta)$, such as Lipschitzness of the gradient, we need additional
structure on the environment. We now assume that it is a \emph{Partially Observed Markov
Decision Process} (POMDP), which captures most practically relevant settings while
enabling sharper analysis.

\begin{assumption}[POMDP environment]\label{ass:pomdp}
There exist a finite latent state space $\mathcal{X}$, a transition kernel $P(x'|x,a)$,
and an emission kernel $O(o|x)$ such that
\begin{equation*}
  x_{t+1} \sim P(\cdot \mid x_t, a_t),
  \qquad
  o_{t+1} \sim O(\cdot \mid x_{t+1}),
\end{equation*}
with the agent's internal state evolving via
$s_{t+1} \sim f_\theta(\cdot \mid s_t, a_t, o_{t+1})$.
\end{assumption}

Under Assumption~\ref{ass:pomdp}, the joint process $(x_t, s_t)$ is Markovian, even
though the observations and internal states alone are not. This latent Markov structure
is the key that allows us to express $V_D(\theta)$ as the value of a well-defined joint
Markov chain and then differentiate it.

\subsection{Lipschitz Gradient of the Outer Objective}
\label{sec:lipschitz}

Our goal is to show that $\nabla_\theta V_D(\theta)$ is Lipschitz continuous, which is
the regularity condition needed for both convergence proofs that follow. Establishing
this requires bounding the first and second derivatives of $V_D$ with respect to
$\theta$, which we build up in two lemmas: one for the quantities
$(\bar{r}_\theta, \bar{P}_\theta)$ defined by the behavioral data, and another one for the
closed-loop joint model under the optimized policy.

\begin{lemma}[Smooth stationary averages]\label{lem:surrogate_bounds}
Under Assumptions~\ref{ass:smooth_kernel},~\ref{ass:ergodic},
and~\ref{ass:pomdp}, there exist finite constants
$C_{\bar{r},1}, C_{\bar{r},2}, C_{\bar{P},1}, C_{\bar{P},2}$ such that for all
$s, s' \in \mathcal{S}$, $a \in \mathcal{A}$, and $\theta \in \Theta$:
\begin{align*}
  \|\nabla_\theta \bar{r}_\theta(s,a)\|_2 &\le C_{\bar{r},1},
  &\|\nabla^2_\theta \bar{r}_\theta(s,a)\|_2 &\le C_{\bar{r},2},\\
  \|\nabla_\theta \bar{P}_\theta(s'|s,a)\|_2 &\le C_{\bar{P},1},
  &\|\nabla^2_\theta \bar{P}_\theta(s'|s,a)\|_2 &\le C_{\bar{P},2}.
\end{align*}
\end{lemma}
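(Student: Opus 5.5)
The plan is to write $\bar r_\theta$ and $\bar P_\theta$ explicitly as ratios of smooth functions of the stationary distribution of a finite Markov chain. Then $C^2$ regularity with uniform bounds follows from smoothness of each ingredient and compactness of $\Theta$.

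Under Assumption~\ref{ass:pomdp} with behavioral policy $\pi_b$, the triple $z_t=(x_t,s_t,a_t)$ is a finite-state Markov chain. Its kernel is
\begin{align*}
  K_\theta(x',s',a'\mid x,s,a)
  = P(x'|x,a)\sum_{o}O(o|x')\,f_\theta(s'|s,a,o)\,\pi_b(a'|s').
\end{align*}
This kernel is affine in $f_\theta$, so by Assumption~\ref{ass:smooth_kernel}(ii) it is $C^2$ in $\theta$ with uniformly bounded first and second derivatives. Let $\mu_\theta$ denote its stationary distribution, which is unique on the recurrent class by Assumption~\ref{ass:ergodic}. By Assumption~\ref{ass:ergodic},
\[
  \tilde\pi_\theta(s,a)=\sum_x \mu_\theta(x,s,a)>0,
\]
and the stationary averages in \eqref{eq:rbar}--\eqref{eq:Pbar} can be written as
\begin{align*}
  \bar r_\theta(s,a)&=\frac{\sum_{x}\mu_\theta(x,s,a)\sum_{x',o}P(x'|x,a)O(o|x')\,r(s,a,o)}{\tilde\pi_\theta(s,a)},\\
  \bar P_\theta(s'|s,a)&=\frac{\sum_{x}\mu_\theta(x,s,a)\sum_{x',o}P(x'|x,a)O(o|x')\,f_\theta(s'|s,a,o)}{\tilde\pi_\theta(s,a)}.
\end{align*}
This is where the POMDP structure is used: without it, the conditioning on the history $o^n,a^n$ in the theorem's proof would not reduce to a finite object.

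The first key step is differentiability of $\theta\mapsto\mu_\theta$. Write $\mu_\theta$ as the unique solution of
\[
  \mu^\top (I-K_\theta)=0,\qquad \mu^\top\mathbf 1=1,
\]
which is equivalent to $\mu_\theta^\top=e^\top A_\theta^{-1}$ for the invertible matrix $A_\theta=I-K_\theta+\mathbf 1\nu^\top$. Alternatively, use the group inverse $(I-K_\theta)^\#$. Differentiating gives
\[
  \nabla\mu_\theta^\top=\mu_\theta^\top\,\nabla K_\theta\,(I-K_\theta)^\#,
\]
and differentiating once more gives second derivatives as sums of products of $\mu_\theta$, $\nabla K_\theta$, $\nabla^2K_\theta$ and $(I-K_\theta)^\#$. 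So $\mu_\theta$ is $C^2$, and its derivatives are bounded provided $\sup_{\theta\in\Theta}\|(I-K_\theta)^\#\|<\infty$.

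The second step is to obtain that uniform bound, which I expect to be the main obstacle. Ergodicity holds for every $\theta$, so $A_\theta$ is invertible for every $\theta$. Since $\theta\mapsto A_\theta$ is continuous and $\Theta$ is compact (Assumption~\ref{ass:compact}), $\det A_\theta$ is bounded away from zero, and hence $\|A_\theta^{-1}\|$ is uniformly bounded. One caveat: the recurrent class must not change with $\theta$, otherwise uniqueness of the stationary law could degenerate at a boundary point. Here Assumption~\ref{ass:smooth_kernel}(i) helps, since positivity of $f_\theta$ makes the support of $K_\theta$ independent of $\theta$. By the same continuity-and-compactness argument, $\tilde\pi_\theta(s,a)$ is bounded below by some $c>0$ uniformly in $\theta$.

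The final step is the quotient and product rules. The numerators are bilinear in $(\mu_\theta, f_\theta)$ with bounded fixed coefficients, where $|r|\le\Rmax$ and $P$, $O$ are stochastic. Dividing by $\tilde\pi_\theta\ge c$ then gives explicit bounds. For example,
\[
  \|\nabla\bar r_\theta\|\le 2\Rmax\|\nabla\mu_\theta\|/c,
\]
and the second derivatives are controlled by terms of order $c^{-3}$ times polynomials in the first- and second-derivative bounds. Taking suprema over the finitely many $(s,a,s')$ yields the constants $C_{\bar r,1}, C_{\bar r,2}, C_{\bar P,1}, C_{\bar P,2}$.
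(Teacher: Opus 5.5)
Your proposal is correct and follows essentially the same route as the paper. Both arguments use a finite joint Markov chain, differentiate its stationary law implicitly through a fundamental-matrix-type inverse, write $\bar r_\theta$ and $\bar P_\theta$ as ratios whose denominator is bounded away from zero, and finish with the product and quotient rules; the only structural difference is that you use the chain on $(x,s,a)$, whereas the paper uses $(x,s)$ and conditions on $s$ alone, which suffices because $a\sim\pi_b(\cdot|s)$ is independent of $x$ given $s$. Your continuity-plus-compactness argument (Assumption~\ref{ass:compact}) for $\sup_{\theta}\|A_\theta^{-1}\|<\infty$ and for a uniform lower bound on $\tilde\pi_\theta(s,a)$ is actually more careful than the paper, which asserts a $\theta$-uniform bound on $\|Z_\theta\|$ ``proportional to the mixing time'' and uses a uniform $\mu_{\min}$ that Assumption~\ref{ass:ergodic} alone does not supply. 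One small correction: $\mu_\theta^\top=\nu^\top A_\theta^{-1}$ with $\nu^\top\mathbf 1\neq 0$, so your $e^\top$ should read $\nu^\top$ unless you intend $\nu=e$.
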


\begin{proof}
Let $\mathcal{Z} = \mathcal{X} \times \mathcal{S}$ and $z = (x,s)$. Under $\pi_b$,
$f_\theta$, and the POMDP dynamics $(P, O)$, the joint process on $\mathcal{Z}$ is a
Markov chain with transition kernel
\begin{equation*}
  P_{\mathrm{jt}}(z'|z)
  = \sum_{a} \pi_b(a|s)\, P(x'|x,a) \sum_{o} O(o|x')\, f_\theta(s'|s,a,o).
\end{equation*}
By Assumption~\ref{ass:ergodic} this chain is ergodic with unique stationary
distribution $d_\theta$, satisfying $d_\theta^\top = d_\theta^\top P_{\mathrm{jt}}$.
Let $Z_\theta = (I - P_{\mathrm{jt}} + \mathbf{1} d_\theta^\top)^{-1}$ be the fundamental
matrix. Differentiating the stationarity equation gives
\begin{equation*}
  \nabla_\theta d_\theta^\top = d_\theta^\top (\nabla_\theta P_{\mathrm{jt}}) Z_\theta.
\end{equation*}
By Assumption~\ref{ass:smooth_kernel}, $\|\nabla_\theta f_\theta\|_2 \le C_{f,1}$, and thus
$\|\nabla_\theta P_{\mathrm{jt}}\|_2$ is bounded. Since $\mathcal{Z}$ is finite and the
chain is ergodic, $\|Z_\theta\|$ is bounded by a constant proportional to the mixing
time, yielding a constant $C_{d,1}$ with
$\sum_{x}\|\nabla_\theta d_\theta(x,s)\|_2 \le C_{d,1}$ for all $s$.

Define the conditional observation weight
$w_\theta(o|s,a) = N_\theta(o,s,a) / D_\theta(s)$, where
$D_\theta(s) = \sum_x d_\theta(x,s)$ and
$N_\theta(o,s,a) = \sum_x d_\theta(x,s) \sum_{x'} P(x'|x,a) O(o|x')$.
Differentiating the numerator and the denominator separately gives
$\|\nabla_\theta D_\theta(s)\|_2 \le C_{d,1}$ and
$\|\nabla_\theta N_\theta(o,s,a)\|_2 \le C_{d,1}$.
Applying the quotient rule with $D_\theta(s) \ge \mu_{\min}$ (Assumption~\ref{ass:ergodic}):
\begin{equation*}
  \|\nabla_\theta w_\theta(o|s,a)\|_2
  \;\le\; \frac{2\,C_{d,1}}{\mu_{\min}}
  \;\triangleq\; C_{w,1}.
\end{equation*}

Since $\bar{r}_\theta(s,a) = \sum_o w_\theta(o|s,a)\,r(s,a,o)$, and $|r(\cdot)|\le\Rmax$:
\begin{equation*}
  \|\nabla_\theta \bar{r}_\theta(s,a)\|_2
  \;\le\; |\mathcal{O}|\,C_{w,1}\,\Rmax
  \;\triangleq\; C_{\bar{r},1}.
\end{equation*}
Since $\bar{P}_\theta(s'|s,a) = \sum_o w_\theta(o|s,a)\,f_\theta(s'|s,a,o)$, the
product rule and the fact that $\sum_o f_\theta(s'|s,a,o)$ and $\sum_o w_\theta(o|s,a)$
each equal $1$ lead to
\begin{equation*}
  \|\nabla_\theta \bar{P}_\theta(s'|s,a)\|_2
  \;\le\; C_{w,1} + C_{f,1}
  \;\triangleq\; C_{\bar{P},1}.
\end{equation*}

The second-derivative bounds $C_{\bar{r},2}$ and $C_{\bar{P},2}$ follow by
differentiating each expression above once more. Assumption~\ref{ass:smooth_kernel}
supplies $\|\nabla^2_\theta f_\theta\|_2 \le C_{f,2}$, which propagates through
$P_{\mathrm{jt}}$, $d_\theta$, $w_\theta$, and finally $\bar{r}_\theta$ and
$\bar{P}_\theta$ by iterated application of the product and quotient rules.
\end{proof}

Lemma~\ref{lem:surrogate_bounds} controls how the stationary averages $\bar r_\theta$
and $\bar P_\theta$ change as $\theta$ varies. The next lemma lifts this to the
\emph{closed-loop} joint model---the reward and transition seen by the joint process
$(x_t, s_t)$ when the agent follows the optimized policy $\widehat\pi_\theta$.
The key step is propagating the $\theta$-dependence of $\widehat\pi_\theta$ through
the soft Bellman equation.

\begin{lemma}[Smooth closed-loop model]\label{lem:joint_bounds}
Define the joint reward and transition under the
closed-loop system $(f_\theta, \widehat\pi_\theta)$:
\begin{align*}
  r^\theta(z)
  &= \sum_{a,o,x'} \widehat\pi_\theta(a|s)\,P(x'|x,a)\,O(o|x')\,r(s,a,o),\\
  P^\theta(z'|z)
  &= \sum_{a,o} \widehat\pi_\theta(a|s)\,P(x'|x,a)\,O(o|x')\,f_\theta(s'|s,a,o).
\end{align*}
Under Assumptions~\ref{ass:compact}--\ref{ass:pomdp}, there exist finite constants $C_{r,1}, C_{r,2}, C_{P,1}, C_{P,2}$ such that for all
$z, z' \in \mathcal{Z}$ and $\theta \in \Theta$:
\begin{equation*}
  \|\nabla_\theta r^\theta(z)\|_2 \le C_{r,1},
  \quad
  \|\nabla_\theta P^\theta(z'|z)\|_2 \le C_{P,1},
\end{equation*}
and the same bounds hold for second derivatives with constants $C_{r,2}$ and $C_{P,2}$.
\end{lemma}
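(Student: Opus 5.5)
The plan is to reduce every bound in the statement to bounds on $\nabla_\theta \widehat\pi_\theta$ and $\nabla^2_\theta \widehat\pi_\theta$. The factors $P(x'|x,a)$, $O(o|x')$ and $r(s,a,o)$ do not depend on $\theta$. Hence
\begin{equation*}
  \nabla_\theta r^\theta(z) = \sum_{a,o,x'} \nabla_\theta\widehat\pi_\theta(a|s)\,P(x'|x,a)\,O(o|x')\,r(s,a,o),
\end{equation*}
and, since the inner sums over $x'$ and $o$ are probability averages, this gives $\|\nabla_\theta r^\theta(z)\|_2 \le \Rmax \sum_a \|\nabla_\theta \widehat\pi_\theta(a|s)\|_2$.

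For $P^\theta$ the product rule gives two terms. The first carries $\nabla_\theta\widehat\pi_\theta$ and is bounded in the same way as for $r^\theta$, with $f_\theta\le 1$ in place of $\Rmax$. The second carries $\nabla_\theta f_\theta$ and is bounded by $C_{f,1}$ via Assumption~\ref{ass:smooth_kernel}. Differentiating once more gives three kinds of terms: $\nabla^2\widehat\pi_\theta\cdot f_\theta$, $\nabla\widehat\pi_\theta\otimes\nabla f_\theta$, and $\widehat\pi_\theta\nabla^2 f_\theta$. Each is bounded once $\nabla\widehat\pi_\theta$ and $\nabla^2\widehat\pi_\theta$ are uniformly bounded.

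Next, I write $\widehat\pi_\theta(a|s) = \mathrm{softmax}_a(Q^\star_\theta(s,\cdot)/\tau)$, with $Q^\star_\theta$ given by Theorem~\ref{thm:softq_convergence}. The softmax map has Jacobian $\tau^{-1}(\mathrm{diag}(\pi)-\pi\pi^\top)$, whose entries are bounded by $1/\tau$, and its second derivatives are bounded by a multiple of $1/\tau^2$. By the chain rule, it therefore suffices to bound $\nabla_\theta Q^\star_\theta$ and $\nabla^2_\theta Q^\star_\theta$ uniformly in $\ell_\infty$ over $(s,a)$.

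To get those bounds I use the implicit function theorem on the soft Bellman equation \eqref{eq:soft_bellman}. Write it as $Q = \bar r_\theta + \gamma \bar P_\theta\, \Lambda(Q)$, where $\Lambda(Q)(s') = \tau\log\sum_{a'} e^{Q(s',a')/\tau}$. The gradient of $\Lambda$ in $Q(s',\cdot)$ is the softmax vector, so $\gamma\bar P_\theta\,\partial_Q\Lambda$ is $\gamma$ times a row-stochastic matrix. Hence $I-\gamma\bar P_\theta\,\partial_Q\Lambda$ is invertible with $\ell_\infty$ inverse norm at most $1/(1-\gamma)$. Differentiating,
\begin{equation*}
  \nabla_\theta Q^\star_\theta = (I-\gamma \bar P_\theta \partial_Q\Lambda)^{-1}\bigl(\nabla_\theta \bar r_\theta + \gamma(\nabla_\theta \bar P_\theta)\Lambda(Q^\star_\theta)\bigr).
\end{equation*}
I bound the right-hand side using Lemma~\ref{lem:surrogate_bounds} together with $\|\Lambda(Q^\star_\theta)\|_\infty \le (\Rmax+\tau\log|\mathcal{A}|)/(1-\gamma)$. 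The latter follows from the contraction property, since $\|Q^\star_\theta\|_\infty \le (\Rmax + \gamma\tau\log|\mathcal A|)/(1-\gamma)$. This yields
\begin{equation*}
  \|\nabla_\theta Q^\star_\theta\|_\infty \le \frac{C_{\bar r,1} + \gamma|\mathcal S|C_{\bar P,1}(\Rmax+\tau\log|\mathcal A|)/(1-\gamma)}{1-\gamma}.
\end{equation*}
Differentiability of $\theta\mapsto Q^\star_\theta$ itself comes from the implicit function theorem, using the $C^2$ smoothness of $(\bar r_\theta,\bar P_\theta)$ (Lemma~\ref{lem:surrogate_bounds}) and of $\Lambda$.

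The main obstacle is the second derivative of $Q^\star_\theta$. Differentiating the identity above once more produces the same invertible operator applied to a source term. That source term contains $\nabla^2\bar r_\theta$ and $\nabla^2\bar P_\theta\,\Lambda$, cross terms $\nabla\bar P_\theta\,\partial_Q\Lambda\,\nabla Q^\star_\theta$, and the curvature term $\gamma\bar P_\theta\,\partial^2_Q\Lambda[\nabla Q^\star_\theta,\nabla Q^\star_\theta]$. I will bound $\partial^2_Q\Lambda$ by $1/\tau$, since the Hessian of log-sum-exp is the softmax covariance divided by $\tau$. Each source term is then bounded by constants already obtained, and applying the $1/(1-\gamma)$ bound gives a uniform $C_{Q,2}$. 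The resulting constants blow up as $\tau\to0$, consistent with Remark~\ref{rem:nearopt}. Compactness of $\Theta$ (Assumption~\ref{ass:compact}) is not even needed beyond the uniformity already provided by Lemma~\ref{lem:surrogate_bounds}. Collecting the softmax chain-rule bounds then gives $C_{r,1},C_{r,2},C_{P,1},C_{P,2}$.
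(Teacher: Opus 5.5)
Your proposal is correct and follows the paper's route. Like the paper, you reduce everything to derivatives of the softmax policy, implicitly differentiate the soft Bellman equation, and invert $I-\gamma\bar P_\theta\,\partial_Q\Lambda$ with $\ell_\infty$ norm at most $1/(1-\gamma)$; the second derivative is handled by differentiating once more. Your version is in fact more careful than the paper's in three places: you keep the $\nabla_\theta f_\theta$ term in $P^\theta$, you include the entropy term $\tau\log|\mathcal{A}|$ in the bound on the soft value (the paper's $|V^\star_\theta|\le \Rmax/(1-\gamma)$ omits it), and you bound the log-sum-exp curvature term explicitly.
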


\begin{proof}
Since $P$ and $O$ are fixed (independent of $\theta$), bounding
$\|\nabla_\theta r^\theta\|$ and $\|\nabla_\theta P^\theta\|$ reduces to bounding
$\|\nabla_\theta \widehat\pi_\theta(a|s)\|_2$.

Since $\widehat\pi_\theta(a|s) \propto \exp(Q^\star_\theta(s,a)/\tau)$ (with
temperature $\tau > 0$), the chain rule through the softmax gives
\begin{equation*}
  \|\nabla_\theta \widehat\pi_\theta(a|s)\|_2
  \;\le\; \frac{2}{\tau}\,B_Q,
  \qquad
  B_Q \;\triangleq\; \max_{s,a}\|\nabla_\theta Q^\star_\theta(s,a)\|_2.
\end{equation*}
To bound $B_Q$, differentiate the soft Bellman equation~\eqref{eq:soft_bellman}
implicitly with respect to $\theta$:
\begin{align*}
  \nabla_\theta Q^\star_\theta(s,a)
  &= \nabla_\theta \bar{r}_\theta(s,a)
    + \gamma\sum_{s'}\!\Bigl[
      \nabla_\theta \bar{P}_\theta(s'|s,a)\,V^\star_\theta(s')\\
    &\qquad + \bar{P}_\theta(s'|s,a)
        \sum_{a'}\widehat\pi_\theta(a'|s')\,
        \nabla_\theta Q^\star_\theta(s',a')
    \Bigr],
\end{align*}
where $V^\star_\theta(s') = \tau\log\sum_{a'}\exp(Q^\star_\theta(s',a')/\tau)$.
Taking the $\ell_2$ norm, using $|V^\star_\theta(s')| \le \Rmax/(1-\gamma)$, and
exploiting the contraction factor $\gamma < 1$, yields the self-consistent bound
\begin{equation*}
  B_Q
  \;\le\;
  \frac{1}{1-\gamma}
  \!\left(
    C_{\bar{r},1}
    + \frac{\gamma \Rmax\,|\mathcal{S}|}{1-\gamma}\,C_{\bar{P},1}
  \right)
  \;\triangleq\; C_Q,
\end{equation*}
which is finite by Lemma~\ref{lem:surrogate_bounds}. This gives constants $C_{r,1}$ and
$C_{P,1}$.

The second-derivative bounds $C_{r,2}$ and $C_{P,2}$ follow by differentiating the soft
Bellman equation once more with respect to $\theta$. The resulting expression for
$\nabla^2_\theta Q^\star_\theta$ involves $\nabla^2_\theta \bar{r}_\theta$,
$\nabla^2_\theta \bar{P}_\theta$, $\nabla_\theta \widehat\pi_\theta$, and
$\nabla_\theta Q^\star_\theta$---all of which are already bounded---together with the
same $\gamma$-contraction argument, yielding a finite constant $B_{Q,2}$. The bounds
$C_{r,2}$ and $C_{P,2}$ then follow by substitution.
\end{proof}

With both lemmas in hand, we can now establish the key structural property of $V_D$.

\begin{theorem}[Lipschitz gradient of $V_D$]\label{thm:lipschitz}
Under Assumptions~\ref{ass:compact}--\ref{ass:pomdp}, the gradient
$\nabla_\theta V_D(\theta)$ is Lipschitz continuous on $\Theta$.
\end{theorem}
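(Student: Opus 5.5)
We will express $V_D(\theta)$ as the discounted value of the closed-loop joint Markov chain on $\mathcal{Z}=\mathcal{X}\times\mathcal{S}$ from Lemma~\ref{lem:joint_bounds}, and bound its Hessian uniformly on $\Theta$. Let $\nu_0$ be the (fixed, $\theta$-independent) initial distribution on $\mathcal{Z}$. Stack $r^\theta$ into a vector and $P^\theta$ into a matrix. Under Assumption~\ref{ass:pomdp}, with the agent following $\widehat\pi_\theta$, we have
\begin{equation*}
  V_D(\theta) \;=\; \nu_0^\top v_\theta,
  \qquad
  v_\theta \;=\; (I-\gamma P^\theta)^{-1} r^\theta .
\end{equation*}
Since $P^\theta$ is row-stochastic, $\|(I-\gamma P^\theta)^{-1}\|_\infty \le 1/(1-\gamma)$ and $\|v_\theta\|_\infty\le \Rmax/(1-\gamma)$ uniformly in $\theta$. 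Twice continuous differentiability of $\theta\mapsto(r^\theta,P^\theta)$ follows from Assumption~\ref{ass:smooth_kernel} together with the implicit-function argument used in Lemma~\ref{lem:joint_bounds}. That argument is justified because the soft Bellman map is a $\gamma$-contraction, so $I-\gamma\,\partial_Q(\text{soft Bellman})$ is invertible.

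\textbf{Step 1 (first derivative).} Differentiating $(I-\gamma P^\theta)v_\theta=r^\theta$ in coordinate $\theta_i$ gives
\begin{equation*}
  \partial_i v_\theta = (I-\gamma P^\theta)^{-1}\bigl(\partial_i r^\theta + \gamma\, \partial_i P^\theta\, v_\theta\bigr).
\end{equation*}
By Lemma~\ref{lem:joint_bounds}, $\|\partial_i v_\theta\|_\infty \le \frac{1}{1-\gamma}\bigl(C_{r,1}+\gamma|\mathcal{Z}|C_{P,1}\Rmax/(1-\gamma)\bigr)$.

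\textbf{Step 2 (second derivative).} Differentiating once more gives
\begin{equation*}
  \partial_{ij} v_\theta = (I-\gamma P^\theta)^{-1}\bigl(\partial_{ij} r^\theta + \gamma\,\partial_{ij}P^\theta\, v_\theta + \gamma\,\partial_i P^\theta\,\partial_j v_\theta + \gamma\,\partial_j P^\theta\,\partial_i v_\theta\bigr).
\end{equation*}
Every term on the right is bounded uniformly in $\theta$ by the constants $C_{r,2}$, $C_{P,2}$, $C_{P,1}$, the bound on $\|v_\theta\|_\infty$, and the Step~1 bound. Summing over the $d^2$ entries then yields
\begin{equation*}
  \|\nabla^2_\theta V_D(\theta)\|_2\le L:=\frac{d}{1-\gamma}\Bigl(C_{r,2}+\tfrac{\gamma|\mathcal{Z}|\Rmax}{1-\gamma}C_{P,2}+2\gamma|\mathcal{Z}|C_{P,1}B_v\Bigr),
\end{equation*}
where $B_v$ is the Step~1 bound.

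\textbf{Step 3 (conclusion).} $\Theta$ is convex by Assumption~\ref{ass:compact}, so the segment between any $\theta,\theta'\in\Theta$ lies in $\Theta$. The mean value theorem applied to $\nabla V_D$ along this segment gives $\|\nabla V_D(\theta)-\nabla V_D(\theta')\|_2\le L\|\theta-\theta'\|_2$.

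\textbf{Main obstacle.} The delicate point is to justify that $V_D$ really is the joint-chain value with $\theta$-independent $\nu_0$, and that $\widehat\pi_\theta$ is genuinely $C^2$ rather than merely having bounded difference quotients. For the first, we use that the initial latent and agent states are fixed. If stationarity is instead invoked, so that $\nu_0$ depends on $\theta$, we would add a term $\nabla d_\theta^\top v_\theta$ and bound it with the fundamental-matrix argument of Lemma~\ref{lem:surrogate_bounds}, including its second derivative. For the second, we invoke the implicit function theorem on the soft Bellman fixed-point equation, using the $C^2$ smoothness of log-sum-exp for $\tau>0$ and the $C^2$ smoothness of $(\bar r_\theta,\bar P_\theta)$ from Lemma~\ref{lem:surrogate_bounds}.
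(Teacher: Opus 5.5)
Your proposal is correct and follows the paper's proof essentially step for step. Like the paper, you write $V_D(\theta)=\mu_0^\top(I-\gamma P^\theta)^{-1}r^\theta$ via the closed-loop joint chain on $\mathcal{Z}$, differentiate the Bellman equation twice, bound the Hessian uniformly with the constants of Lemma~\ref{lem:joint_bounds}, and conclude that the gradient is Lipschitz. Your version is somewhat more careful than the paper's: it tracks the $|\mathcal{Z}|$ and $d$ factors the paper suppresses, uses convexity of $\Theta$ explicitly for the mean-value step, and flags two points the paper takes for granted, namely the $\theta$-independence of the initial distribution and the $C^2$ regularity of $\widehat\pi_\theta$.
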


\begin{proof}
Under Assumption~\ref{ass:pomdp}, define the joint state $z_t = (x_t, s_t) \in \mathcal{Z}$.
Under the closed-loop pair $(f_\theta, \widehat\pi_\theta)$, the joint process is a
Markov chain with transition $P^\theta$ and one-step reward $r^\theta$ from
Lemma~\ref{lem:joint_bounds}. The value function $V^\theta \in \mathbb{R}^{|\mathcal{Z}|}$
satisfies the Bellman equation
\begin{equation*}
  V^\theta = r^\theta + \gamma\, P^\theta V^\theta
  \;\;\Longrightarrow\;\;
  V^\theta = (I - \gamma P^\theta)^{-1} r^\theta,
\end{equation*}
and $V_D(\theta) = \mu_0^\top V^\theta$ for some initial distribution $\mu_0 \in
\Delta(\mathcal{Z})$. Since $\|V^\theta\|_\infty \le \Rmax/(1-\gamma)$, differentiating
the Bellman equation gives
\begin{align*}
  \nabla_\theta V^\theta
  = (I - \gamma P^\theta)^{-1}
    \!\left(\nabla_\theta r^\theta + \gamma\,(\nabla_\theta P^\theta)\, V^\theta\right).
\end{align*}
Since $\|(I-\gamma P^\theta)^{-1}\| \le 1/(1-\gamma)$, Lemma~\ref{lem:joint_bounds}
yields
\begin{equation*}
  \|\nabla_\theta V^\theta\|
  \;\le\;
  \frac{C_{r,1} + \gamma\,\Rmax\,C_{P,1}/(1-\gamma)}{1-\gamma}
  \;\triangleq\; C_{V,1}.
\end{equation*}
Differentiating the Bellman equation once more:
\begin{align*}
  \nabla^2_\theta V^\theta
  &= (I-\gamma P^\theta)^{-1}\!
    \Big(
      \nabla^2_\theta r^\theta
      + \gamma\,(\nabla^2_\theta P^\theta)\, V^\theta
      \\
      &+ 2\gamma\,(\nabla_\theta P^\theta)\,(\nabla_\theta V^\theta)
    \Big).
\end{align*}
Applying the second-derivative bounds from Lemma~\ref{lem:joint_bounds}:
\begin{align}\label{eq: lipschitz_bound}
  \|\nabla^2_\theta V^\theta\|
  \leq
  \frac{C_{r,2} + \gamma\,\Rmax\,C_{P,2}/(1-\gamma) + 2\gamma\,C_{P,1}\,C_{V,1}}{1-\gamma}.
\end{align}
Since $\mu_0$ is a probability vector, defining the bound in \ref{eq: lipschitz_bound} as $L$, we have
$\|\nabla^2_\theta V_D(\theta)\| \le L$ uniformly over $\Theta$, which implies that
$\nabla_\theta V_D$ is $L$-Lipschitz continuous.
\end{proof}

\subsection{Bounded Variance of the Monte Carlo Estimator}
\label{sec:mc_variance}

In Algorithm~\ref{alg:main} we estimate $V_D(\theta)$ by averaging $M$ independent
rollouts of horizon $H$. Since rewards are bounded, the truncated return has bounded
variance and the estimator is sub-Gaussian---a property used in both convergence
analyses below.

\begin{lemma}[Bounded-variance Monte Carlo estimator]\label{lem:mc_variance}
For a fixed $\theta$, let $\widehat{V}_D(\theta) = \frac{1}{M}\sum_{m=1}^M G_m$ where
$G_m = \sum_{t=0}^{H-1}\gamma^t r_m(s_t,a_t,o_{t+1})$ is the discounted return of the
$m$-th rollout. Then:
\begin{enumerate}
  \item[\normalfont(i)] each $G_m$ is bounded, with
    $|G_m| \le G_{\max} \triangleq \Rmax(1-\gamma^H)/(1-\gamma)$;
  \item[\normalfont(ii)] $\widehat{V}_D(\theta)$ is an unbiased estimator of the
    $H$-truncated return, with conditional variance
    $\mathrm{Var}[\widehat{V}_D(\theta)] \le G_{\max}^2/M \le \sigma^2/M$, where
    $\sigma^2 \triangleq \Rmax^2/(1-\gamma)^2$; and
  \item[\normalfont(iii)] $\widehat{V}_D(\theta)$ is $\sigma/\sqrt{M}$-sub-Gaussian, i.e.\
    $\log \mathbb{E}\bigl[\exp\bigl(\lambda(\widehat{V}_D(\theta) - \mathbb{E}\widehat{V}_D(\theta))\bigr)\bigr]
    \le \lambda^2\sigma^2/(2M)$ for all $\lambda \in \mathbb{R}$.
\end{enumerate}
\end{lemma}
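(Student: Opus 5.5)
The plan is to prove the three parts in order. Each one uses only the bound $|r(\cdot)|\le\Rmax$ and the fact that the $M$ rollouts are independent and identically distributed for fixed $\theta$; none of the POMDP structure is needed.

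\emph{Part (i).} For every realization, the triangle inequality gives
$|G_m|\le\sum_{t=0}^{H-1}\gamma^t|r_m(s_t,a_t,o_{t+1})|\le\Rmax\sum_{t=0}^{H-1}\gamma^t=\Rmax(1-\gamma^H)/(1-\gamma)=G_{\max}$.
Since $0<1-\gamma^H\le 1$, we also get $G_{\max}\le\Rmax/(1-\gamma)=\sigma$, which is used repeatedly below.

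\emph{Part (ii).} For fixed $\theta$, the rollouts use the same closed-loop law $(\rho,f_\theta,\widehat\pi_\theta)$ from the same initial distribution and are generated independently. So the $G_m$ are i.i.d. with common mean $V_D^H(\theta):=\mathbb{E}[G_1]$, the $H$-truncated return. Linearity of expectation gives $\mathbb{E}\widehat V_D(\theta)=V_D^H(\theta)$. For the variance, independence yields $\mathrm{Var}[\widehat V_D]=\mathrm{Var}[G_1]/M$, and $\mathrm{Var}[G_1]\le\mathbb{E}[G_1^2]\le G_{\max}^2$ by part (i). This gives $\mathrm{Var}[\widehat V_D]\le G_{\max}^2/M\le\sigma^2/M$. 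I will state explicitly that the bias relative to $V_D(\theta)$ is at most $\gamma^H\Rmax/(1-\gamma)$, since later convergence arguments will need it, even though the lemma only claims unbiasedness for the truncated return.

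\emph{Part (iii).} This is the only step with any substance. I will apply Hoeffding's lemma to each centered summand: $G_m$ lies in $[-G_{\max},G_{\max}]$, an interval of length $2G_{\max}$, so
$\log\mathbb{E}\exp(\lambda(G_m-\mathbb{E}G_m))\le\lambda^2(2G_{\max})^2/8=\lambda^2G_{\max}^2/2$.
By independence, the log-moment generating function of $\widehat V_D-\mathbb{E}\widehat V_D=\frac1M\sum_m(G_m-\mathbb{E}G_m)$ equals the sum of the individual ones evaluated at $\lambda/M$. This gives $M\cdot(\lambda/M)^2G_{\max}^2/2=\lambda^2G_{\max}^2/(2M)\le\lambda^2\sigma^2/(2M)$, as claimed.

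The main point to get right is the constant in Hoeffding's lemma: one must use the full range $2G_{\max}$, not $G_{\max}$, to obtain exactly $\sigma^2/(2M)$. One must also make sure independence across rollouts holds, which it does by the construction of the evaluation step. Everything else is routine.
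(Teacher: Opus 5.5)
Your proof is correct and follows essentially the same route as the paper: the geometric series for (i); independence plus a range bound on $\mathrm{Var}[G_1]$ for (ii), where you use $\mathrm{Var}[G_1]\le\mathbb{E}[G_1^2]\le G_{\max}^2$ and the paper invokes Popoviciu's inequality to get the same bound; and Hoeffding's lemma on summands of range $2G_{\max}$ for (iii), which the paper phrases as range $2G_{\max}/M$ for $G_m/M$. Your remark that $\widehat{V}_D(\theta)$ is biased for $V_D(\theta)$ by at most $\gamma^H\Rmax/(1-\gamma)$ is worth keeping, since the paper's BO analysis treats $\widehat{V}_D(\theta)-V_D(\theta)$ as zero-mean sub-Gaussian noise without accounting for this truncation bias.
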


\begin{proof}
Part (i) is immediate from $|r(\cdot)|\le\Rmax$ and the geometric series. For part (ii),
the $M$ rollouts are independent, so
$\mathrm{Var}[\widehat{V}_D(\theta)] = \mathrm{Var}[G_1]/M$, and $G_1$ being bounded by
$G_{\max}$, Popoviciu's inequality gives $\mathrm{Var}[G_1] \le G_{\max}^2$. Part (iii)
follows from Hoeffding's lemma for bounded random variables: each $G_m/M$ takes values in
an interval of length $2G_{\max}/M$, so the average is $\sigma/\sqrt{M}$-sub-Gaussian.

\end{proof}

Lemma~\ref{lem:mc_variance} makes precise how the evaluation budget $M$ and the
discount factor $\gamma$ jointly determine the estimation noise. In particular,
$\sigma^2 = \Rmax^2/(1-\gamma)^2$ grows as $\gamma \to 1$---reflecting the increasing
variance of long-horizon returns---and shrinks at rate $1/M$ as more rollouts are used.
The sub-Gaussian property (part iii) is what allows us to invoke GP-UCB confidence
bounds in the Bayesian optimization analysis.

\subsection{Convergence of Algorithm~\ref{alg:main} with ZOO}
\label{sec:zoo_convergence}

With the Lipschitz gradient established and the estimator noise characterized, we can
now analyze the zeroth-order variant. The two-point estimator~\eqref{eq:zo_grad}
introduces two sources of error: a \emph{bias} from the finite smoothing radius
$\delta$ and a \emph{variance} from the Monte Carlo evaluations. The theorem below
shows that both can be controlled, and balancing the two gives an
$\mathcal{O}(1/\sqrt{K})$ convergence rate to a first-order stationary point of $V_D$.

\begin{theorem}[Convergence of ZOO variant]
\label{thm:zoo_convergence}
Under Assumptions~\ref{ass:compact}--\ref{ass:pomdp}, Algorithm~\ref{alg:main} with Algorithm~\ref{alg:zoo} as \textsc{StepOptimize}
for $K$ iterations with constant step size $\eta \le 1/(2L)$ and constant smoothing
radius $\delta > 0$ yields,
\begin{align}
  \frac{1}{K}\sum_{k=0}^{K-1}\mathbb{E}\left[\|G_\eta(\theta_k)\|_2^2\right]
  \leq
  \frac{C_1}{\eta K}
  + \frac{2 L d^2\sigma^2}{M\delta^2}
  + 2\delta^2 d^2 L^2,
\end{align}
where $G_\eta(\theta) = \frac{1}{\eta}(\theta - \Pi_\Theta(\theta +
\eta\,\nabla V_D(\theta)))$ is the gradient mapping, $C_1 = 2\bigl(V^\star - V_D(\theta_0)\bigr)$, 
$V^\star = \max_{\theta\in\Theta} V_D(\theta)$, and $d = \dim\Theta$.
Choosing $\eta = \mathcal{O}(K^{-1/2})$ and $\delta = \mathcal{O}(K^{-1/4})$
gives $\frac{1}{K}\sum_k \mathbb{E}[\|G_\eta(\theta_k)\|_2^2] =
\mathcal{O}(K^{-1/2})$. 
\end{theorem}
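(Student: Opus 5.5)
The argument follows the standard nonconvex projected stochastic gradient analysis for maximization, with the two-point estimator~\eqref{eq:zo_grad} treated as a biased, noisy gradient oracle.

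\textbf{Step 1: decompose the estimator.} Write $\hat g_k = g^\delta_k + e_k$. Here $g^\delta_k = \frac{d}{2\delta}\bigl(V_D(\theta_k+\delta u_k)-V_D(\theta_k-\delta u_k)\bigr)u_k$ uses exact values. The term $e_k = \frac{d}{2\delta}(\xi_+ - \xi_-)u_k$ collects the Monte Carlo errors. I will condition on $\theta_k$ and $u_k$ and invoke Lemma~\ref{lem:mc_variance}(ii). I will also treat the $H$-truncation bias as negligible, or absorb it, identifying $\mathbb{E}\widehat V_D$ with $V_D$. This gives $\mathbb{E}[e_k\mid\theta_k,u_k]=0$ and $\mathbb{E}\|e_k\|^2\le \frac{d^2}{4\delta^2}\cdot\frac{2\sigma^2}{M}\cdot 2 = \frac{d^2\sigma^2}{M\delta^2}$, using independence of the $\pm$ rollouts.

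\textbf{Step 2: bound the smoothing bias.} Theorem~\ref{thm:lipschitz} says $\nabla V_D$ is $L$-Lipschitz, so a second-order Taylor expansion gives
\[
\bigl|V_D(\theta\pm\delta u)-V_D(\theta)\mp\delta\langle\nabla V_D(\theta),u\rangle\bigr|\le L\delta^2/2 .
\]
Hence $g^\delta_k = d\,\langle\nabla V_D(\theta_k),u_k\rangle u_k + b_k$ with $\|b_k\|\le dL\delta/2$. Since $\mathbb{E}[d\,u u^\top]=I$, we have $\mathbb{E}_u[g^\delta_k]=\nabla V_D(\theta_k)+\bar b_k$ with $\|\bar b_k\|\le d L\delta$. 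A crude bound suffices here, and it is what produces the $2\delta^2d^2L^2$ term. One caveat: evaluating at $\theta_k\pm\delta u_k$ may leave $\Theta$. I will assume $f_\theta$, and hence the Lemma~\ref{lem:surrogate_bounds}--\ref{lem:joint_bounds} bounds, extend to a $\delta$-neighbourhood of $\Theta$.

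\textbf{Step 3: one-step ascent inequality.} Let $\theta^+=\Pi_\Theta(\theta_k+\eta\hat g_k)$ and $\tilde G_k=(\theta^+-\theta_k)/\eta$. $L$-smoothness gives
\[
V_D(\theta_{k+1})\ge V_D(\theta_k)+\eta\langle\nabla V_D(\theta_k),\tilde G_k\rangle-\tfrac{L\eta^2}{2}\|\tilde G_k\|^2 .
\]
Write $\nabla V_D=\hat g_k-(e_k+\text{bias})$. The projection property $\langle\hat g_k,\tilde G_k\rangle\ge\|\tilde G_k\|^2$ then yields, for $\eta\le 1/(2L)$, a gain of at least $\frac{\eta}{2}\|\tilde G_k\|^2$ minus an error inner product. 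I then pass from $\tilde G_k$ to the true gradient mapping $G_\eta(\theta_k)$. Nonexpansiveness of $\Pi_\Theta$ gives $\|G_\eta(\theta_k)-\tilde G_k\|\le\|\hat g_k-\nabla V_D(\theta_k)\|$, and hence
\[
\tfrac12\|G_\eta\|^2\le\|\tilde G_k\|^2+\|\hat g_k-\nabla V_D\|^2 .
\]
The error inner product I will handle by Young's inequality, $\langle \epsilon,\tilde G\rangle\le \frac14\|\tilde G\|^2+\|\epsilon\|^2$. I will not try to exploit the zero-mean property of $e_k$, since $\tilde G_k$ depends on $e_k$ through the projection.

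\textbf{Step 4: telescope.} Summing over $k$ and using $V_D(\theta_K)\le V^\star$ gives
\[
\tfrac1K\sum_k\mathbb{E}\|G_\eta(\theta_k)\|^2\le \frac{2(V^\star-V_D(\theta_0))}{\eta K}+c\,\bigl(\mathbb{E}\|e_k\|^2+\|\text{bias}\|^2\bigr).
\]
Plugging in Steps 1--2 gives the stated form, with constants matched up to the factor $L$ in the variance term, which arises from the $L\eta^2$ smoothness term once $\eta\le 1/(2L)$ is used.

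\textbf{Main obstacle.} The delicate point is the second-moment term. The estimator $d\langle\nabla V_D,u\rangle u$ has second moment of order $d\|\nabla V_D\|^2$, not $\|\nabla V_D\|^2$. This must either be absorbed into the left side, which tightens $\eta$ to $\mathcal{O}(1/(dL))$, or bounded by a constant $d\,C_{V,1}^2$ via Theorem~\ref{thm:lipschitz}'s gradient bound and folded into $C_1$-type terms. I will first try the absorption route. If that fails, I will accept a $d$-dependent step-size restriction hidden in the $\mathcal{O}(\cdot)$.

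\textbf{Rate.} Finally, set $\eta\asymp K^{-1/2}$ and $\delta\asymp K^{-1/4}$ with $M$ growing so that $\sigma^2/(M\delta^2)=\mathcal{O}(K^{-1/2})$, i.e.\ $M\gtrsim K$, which balances the terms to $\mathcal{O}(K^{-1/2})$. This requirement on $M$ is implicit in the statement and I will make it explicit.
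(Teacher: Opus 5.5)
\paragraph{The direction-sampling error is a genuine gap.} Your outline matches the paper's: smoothing bias, Monte Carlo variance, a projected ascent step, then telescoping. Steps 1--3 are fine. The problem is the term you defer to the ``main obstacle'': Step 4 silently drops it, and neither proposed fix works. The Step 3 error is $\epsilon_k=\hat g_k-\nabla V_D(\theta_k)=e_k+b_k+\xi_k$, where $\xi_k=d\langle\nabla V_D(\theta_k),u_k\rangle u_k-\nabla V_D(\theta_k)$. Its conditional second moment is $\mathbb{E}[\|\xi_k\|_2^2\mid\theta_k]=(d-1)\|\nabla V_D(\theta_k)\|_2^2$, which does not shrink with $\delta$ or $M$.

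You rightly do not use the zero mean of $\xi_k$ through the projection. As a result it enters your final inequality with an absolute constant, not with a factor $L\eta$. So restricting $\eta$ to $\mathcal{O}(1/(dL))$ absorbs nothing; that is the unconstrained argument, and it relies exactly on the zero-mean step you gave up. Absorption also fails for a second reason: $\|G_\eta(\theta)\|_2$ can vanish at boundary points where $\nabla V_D(\theta)\neq 0$, so it cannot dominate $\|\nabla V_D(\theta)\|_2$. Your other option, bounding the term by $(d-1)C_{V,1}^2$, leaves a floor that is not divided by $\eta K$, and the rate is lost.

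\paragraph{It cannot be fixed by better bookkeeping.} Take $d=2$, $\Theta=[-R,0]^2$, and $V(\theta)=c(\theta_1+\theta_2)$, which is linear and hence $L$-smooth for every $L$. Use exact evaluations and $\eta\le R/c$.
\begin{itemize}
\item The estimator is $\hat g_k=2c(u_{k,1}+u_{k,2})u_k$. Writing $u_k=(\cos\phi_k,\sin\phi_k)$, its first coordinate is $c\bigl(1+\sqrt2\sin(2\phi_k+\pi/4)\bigr)$.
\item Fix $\epsilon\in(0,\sqrt2-1)$. This first coordinate is $\le-\epsilon c$ with some probability $p>0$, independently of $\theta_k$.
\item On that event $(\theta_{k+1})_1\le-\eta\epsilon c$. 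Since $\bigl|\bigl(G_\eta(\theta)\bigr)_1\bigr|=\min\{-\theta_1/\eta,\,c\}$ on $\Theta$, you get $\mathbb{E}\|G_\eta(\theta_{k+1})\|_2^2\ge p\,\epsilon^2c^2$ for every $k$, $\eta$ and $\delta$.
\end{itemize}
With exact evaluations, the claimed right-hand side reduces to $\frac{C_1}{\eta K}+2\delta^2d^2L^2=\mathcal{O}(K^{-1/2})$ under the stated choices of $\eta$ and $\delta$. So no argument that uses only smoothness and bounded gradients yields the bound with one direction per step.

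The paper's sketch has the same hole. Its second-moment bound contains $4d^2\|\nabla V_D(\theta_k)\|_2^2$, which then disappears from the final inequality. A correct version must change something, for example:
\begin{itemize}
\item average $b\gtrsim d\sqrt K$ independent directions per step; or
\item measure stationarity at a fixed scale via a Moreau envelope, where a second-moment bound on $\hat g_k$ suffices; or
\item assume the iterates stay in the interior.
\end{itemize}

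\paragraph{Smaller points.} You are right that the rate needs $M\gtrsim K$. You are also right that the truncation bias, which is amplified by $d/\delta$, and evaluations outside $\Theta$ need care. However, your account of the bare factor $L$ in the variance term is wrong. Your inequality produces an absolute constant there, at most $L\eta\le\tfrac12$ from the smoothness term. A bare $L$ is also not scale-consistent: rescaling rewards and $\tau$ by $\lambda$ (and $\eta$ by $1/\lambda$) multiplies the left side by $\lambda^2$ but that term by $\lambda^3$.
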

\begin{proof}
Define the spherically smoothed objective
$V_\delta(\theta) = \mathbb{E}_{u \sim \mathrm{Uniform}(\mathbb{S}^{d-1})}[V_D(\theta + \delta u)]$.
By the Stokes' theorem identity for uniform spherical smoothing~\cite{flaxman2004online},
the two-point estimator satisfies
$\mathbb{E}[\hat{g}_k \mid \theta_k] = \nabla V_\delta(\theta_k)$.
Due to $L$-Lipschitz gradient (Theorem~\ref{thm:lipschitz}), using standard techniques from gradient free optimization
\cite{duchi2015optimal} and splitting $\hat{g}_k$ into signal
and noise and applying $(a+b)^2 \leq 2a^2 + 2b^2$, the bias and second
moment bounds for the uniform spherical estimator are given by:
\begin{align*}
  &\|\nabla V_\delta(\theta) - \nabla V_D(\theta)\|_2 \leq \frac{Ld\delta}{2},\\
  \mathbb{E}[\|\hat{g}_k\|_2^2 \mid \theta_k]
  &\leq 4d^2\|\nabla V_D(\theta_k)\|_2^2
    + \frac{d^2\sigma^2}{M\delta^2}
    + \delta^2 d^2 L^2,
\end{align*}
where $\sigma^2 = \Rmax^2/(1-\gamma)^2$ (Lemma~\ref{lem:mc_variance}) accounts
for the Monte Carlo evaluation noise.

Applying the projected gradient ascent analysis of~\cite{ghadimi2013stochastic}
to the $L$-smooth objective $V_D$ over the convex compact set $\Theta$, with
$G_\eta(\theta) = \frac{1}{\eta}(\theta - \Pi_\Theta(\theta + \eta\nabla V_D(\theta)))$
as the stationarity measure, telescoping over $k = 0,\dots,K-1$ yields the
stated bound. Setting $\eta = c_1 K^{-1/2}$ and $\delta = c_2 K^{-1/4}$ balances
all three terms at $\mathcal{O}(K^{-1/2})$.
\end{proof}

\subsection{Convergence of Algorithm~\ref{alg:main} with Bayesian Optimization}
\label{sec:bo_convergence}

The ZOO variant makes no assumption on the global structure of $V_D$ and consequently
converges only to a first-order stationary point. By contrast, the BO variant exploits
the \emph{smoothness} of $V_D$, captured by its membership in a Reproducing Kernel
Hilbert Space, to converge to the \emph{global} optimum. This comes at the cost of a
stronger parametrization requirement (for example, convex combination of base kernels) and the
sub-Gaussian noise model, both of which we make precise below.

For the convex-combination parameterization, the following lemma verifies that the
structural assumptions of the paper hold automatically, and moreover that $V_D$ is
infinitely differentiable, the regularity needed for RKHS membership.

\begin{lemma}[Convex combination kernel]\label{lem:convex_kernel}
Let $f_\theta = \sum_{i=1}^N \theta_i f_i$ with $\theta\in\Theta$ the standard
$(N-1)$-simplex, and suppose each base kernel $f_i > 0$ induces an ergodic joint chain
on $\mathcal{Z}$. Then:
\begin{enumerate}
  \item[\normalfont(i)] $f_\theta$ satisfies
    Assumptions~\ref{ass:compact}--\ref{ass:ergodic} for all $\theta\in\Theta$, with
    a uniform lower bound $\mu_{\min}>0$.
  \item[\normalfont(ii)] $V_D \in C^\infty(\Theta)$, i.e.\ all derivatives of $V_D$
    are bounded on $\Theta$.
\end{enumerate}
\end{lemma}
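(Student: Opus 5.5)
Part (i) is checked assumption by assumption, using the affine structure of $\theta \mapsto f_\theta$.

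\emph{Assumption~\ref{ass:compact}.} The standard simplex is compact and convex, so this holds immediately.

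\emph{Assumption~\ref{ass:smooth_kernel}.} For positivity, each $f_\theta(s'|s,a,o) \ge \min_i f_i(s'|s,a,o) > 0$, since the weights $\theta_i$ are nonnegative and sum to one. The map $\theta \mapsto f_\theta$ is linear, so it is $C^\infty$. Its gradient has components $f_i(\cdot)\le 1$, and its Hessian vanishes. This gives $C_{f,1}\le\sqrt{N}$ and $C_{f,2}=0$.

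\emph{Assumption~\ref{ass:ergodic}.} I will use the same affine structure for the joint behavioral kernel:
\[
P_{\mathrm{jt},\theta}=\sum_i \theta_i P_{\mathrm{jt},i}.
\]
Since $f_\theta>0$ and emissions and actions do not depend on $\theta$, irreducibility of the joint chain is inherited from any $P_{\mathrm{jt},i}$. The support of $P_{\mathrm{jt},\theta}$ is the union of the supports of the $P_{\mathrm{jt},i}$ with $\theta_i>0$, and each of these is irreducible and aperiodic by hypothesis. Aperiodicity survives because a self-loop or cycle structure that exists in some $P_{\mathrm{jt},i}$ persists in the union. Hence the chain is ergodic for every $\theta$, with a unique stationary law $d_\theta$.

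\emph{Uniform lower bound $\mu_{\min}$.} Here I will argue by continuity and compactness. The map $\theta\mapsto d_\theta$ is continuous on the simplex, because $d_\theta$ solves the linear system $d^\top(I-P_{\mathrm{jt},\theta}+\mathbf 1 d^\top)=\mathbf 1^\top$, i.e.\ $d_\theta^\top=\mathbf 1^\top Z_\theta$. The matrix in that system is invertible for each ergodic $P_{\mathrm{jt},\theta}$, and matrix inversion is continuous. The quantity $\tilde\pi(s,a)=\sum_x d_\theta(x,s)\pi_b(a|s)$ is positive at each $\theta$: $d_\theta(x,s)>0$ for every $(x,s)$ in the recurrent class, and every $s$ is reachable because $f_\theta>0$. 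I am additionally assuming $\pi_b(a|s)>0$, which is implicit in the coverage part of Assumption~\ref{ass:ergodic}. A positive continuous function on a compact set attains a positive minimum, so we set $\mu_{\min}=\min_{\theta,s,a}\tilde\pi(s,a)>0$.

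For part (ii), I will trace smoothness through the same chain of maps used in Lemmas~\ref{lem:surrogate_bounds}, \ref{lem:joint_bounds} and Theorem~\ref{thm:lipschitz}, upgrading "twice differentiable" to "real-analytic/$C^\infty$":

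\begin{enumerate}
\item The stationary law $d_\theta=(\mathbf 1^\top Z_\theta)^\top$ is a rational function of $\theta$ with nonvanishing denominator on $\Theta$, because $\det(I-P_{\mathrm{jt},\theta}+\mathbf 1 d_\theta^\top)\neq 0$. Equivalently, by Cramer's rule on the system that replaces one stationarity equation with normalization, $d_\theta$ is rational and so $C^\infty$.
\item The weights $w_\theta$ are quotients of polynomials in $d_\theta$, with denominators bounded below by $\mu_{\min}$. Hence $\bar r_\theta$ and $\bar P_\theta$ are $C^\infty$.
\item The soft-Bellman fixed point $Q^\star_\theta$ solves $\Phi(Q,\theta)=0$, where
\[
\Phi(Q,\theta)=Q-\bar r_\theta-\gamma\bar P_\theta\,\tau\log\textstyle\sum e^{Q/\tau}.
\]
The map $\Phi$ is $C^\infty$ jointly in $(Q,\theta)$. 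Its $Q$-Jacobian is $I-\gamma\bar P_\theta\Pi_{\widehat\pi}$, where $\Pi_{\widehat\pi}$ is a row-stochastic policy-averaging matrix. This Jacobian is invertible with inverse norm at most $1/(1-\gamma)$. By the implicit function theorem, $Q^\star_\theta\in C^\infty$.
\item The policy $\widehat\pi_\theta$ is a softmax of $Q^\star_\theta$, so $r^\theta$ and $P^\theta$ are $C^\infty$.
\item The objective $V_D(\theta)=\mu_0^\top(I-\gamma P^\theta)^{-1}r^\theta$ is $C^\infty$, since the inverse is a smooth function of $P^\theta$ on the set where $\|\gamma P^\theta\|_\infty<1$.
\end{enumerate}

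Boundedness of all derivatives on $\Theta$ then follows from continuity on a compact set. To be careful at the boundary of the simplex, I will note that every formula above is defined and smooth on an open neighbourhood of $\Theta$ in the hyperplane $\sum_i\theta_i=1$. There, $f_\theta$ can fail to be a probability kernel only through negative entries, but positivity and invertibility persist for small perturbations by strict inequalities, so the extension is legitimate.

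The main obstacle is the uniform ergodicity and positivity claim in (i), specifically ensuring aperiodicity and $\tilde\pi(s,a)>0$ for every $\theta$, including the vertices. I would handle it as above, via the union-of-supports argument plus compactness. If aperiodicity is delicate, I would instead note that only irreducibility is needed: a unique stationary law together with continuity of $Z_\theta$ holds for irreducible finite chains, using the fundamental matrix. That is all that parts (i) and (ii) actually use.
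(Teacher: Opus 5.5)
Your proposal is correct and follows essentially the same route as the paper. Ergodicity comes from $P_{\mathrm{jt},\theta}\ge\theta_j P_{\mathrm{jt},j}$ (your union-of-supports argument), $\mu_{\min}$ from continuity of $d_\theta$ plus compactness, and smoothness is propagated through $d_\theta$, the implicit-function step for $Q^\star_\theta$, the softmax and the resolvent; you are in fact more careful than the paper about the Jacobian and about extending to a neighbourhood of the simplex's boundary.

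The one slip is the identity $d_\theta^\top=\mathbf 1^\top Z_\theta$ with $Z_\theta=(I-P_{\mathrm{jt},\theta}+\mathbf 1 d_\theta^\top)^{-1}$. It is false in general, and it is also circular because $Z_\theta$ itself contains $d_\theta$. Use instead $d_\theta^\top=\mathbf 1^\top(I-P_{\mathrm{jt},\theta}+\mathbf 1\mathbf 1^\top)^{-1}$, or rely on your Cramer's-rule version, which gives continuity and rationality at once.
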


\begin{proof}
The simplex is compact (Assumption~\ref{ass:compact}). Since $f_\theta$ is linear in
$\theta$, all derivatives of order $\ge 2$ vanish identically, and thus $f_\theta \in
C^\infty(\Theta)$ with trivially bounded derivatives (Assumption~\ref{ass:smooth_kernel}).
Positivity holds because $\theta_j > 0$ for some $j$ and $f_j > 0$. For ergodicity, note that $P_\theta = \sum_i \theta_i P_i \ge \theta_j P_j$ for any $j$ with
$\theta_j > 0$. Since $P_j$ is irreducible and aperiodic, for any $z, z'$ there exists
$l$ with $(P_j)^l(z'|z)>0$, so $(P_\theta)^l(z'|z) \ge \theta_j^l(P_j)^l(z'|z)>0$
and aperiodicity is inherited. Hence $P_\theta$ is ergodic. The map $\theta\mapsto d_\theta$ is continuous (since $\theta\mapsto P_\theta$ is
linear and matrix inversion is continuous on invertibles). The function
$D(s,\theta)=\sum_x d_\theta(x,s)$ is therefore continuous on the compact simplex, and
ergodicity ensures $D(s,\theta)>0$ everywhere. The Extreme Value Theorem yields
$\mu_{\min} = \min_{s,\theta}D(s,\theta) > 0$.

For part (ii), since $f_\theta$ is linear in $\theta$, the stationary averages
$\bar{r}_\theta$ and $\bar{P}_\theta$ are $C^\infty$ in $\theta$. The joint transition
$P^\theta$ and reward $r^\theta$ are compositions of $C^\infty$ maps (the softmax with
$\tau>0$ is $C^\infty$, and $Q^\star_\theta$ is $C^\infty$ by implicit differentiation
of the Bellman equation, which is well-conditioned because $\gamma < 1$). The value
function $V^\theta = (I-\gamma P^\theta)^{-1}r^\theta$ inherits $C^\infty$ smoothness
since the resolvent is an analytic function of $P^\theta$. Hence $V_D\in C^\infty(\Theta)$.
\end{proof}

\begin{theorem}[Global convergence of BO variant]
\label{thm:bo_convergence}
Under Assumptions~\ref{ass:compact}--\ref{ass:pomdp} and the convex combination
parametrization of Lemma~\ref{lem:convex_kernel}, the objective function satisfies
$V_D \in \mathcal{H}_\kappa$ with $\|V_D\|_\kappa \le B < \infty$.
Setting
$\beta_k = B + \frac{\sigma}{\sqrt{M}}\sqrt{2(\gamma_{k-1}+1+\ln(1/\varepsilon))}$
(with $\sigma^2 = \Rmax^2/(1-\gamma)^2$ from Lemma~\ref{lem:mc_variance}), the
cumulative regret of Algorithm~\ref{alg:main} with Algorithm~\ref{alg:bo} as
\textsc{StepOptimize} satisfies, with probability at least $1-\varepsilon$:
\begin{equation*}
  R_K = \sum_{k=1}^K \bigl(V_D(\theta^\star) - V_D(\theta_k)\bigr)
  \;\le\; \mathcal{O}\!\left(\sqrt{K\,\beta_K\,\gamma_K}\right),
\end{equation*}
where $\gamma_K$ is the maximum information gain of $\kappa$ on $\Theta$.
Since the Mat\'{e}rn-$5/2$ kernel has sublinear information gain ($\gamma_K = o(K)$),
average regret $R_K/K \to 0$ as $K\to\infty$, guaranteeing convergence to the global
optimum~$\theta^\star$.
\end{theorem}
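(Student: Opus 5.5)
The plan is to reduce the statement to the standard GP-UCB regret analysis for functions in an RKHS with sub-Gaussian noise, in the agnostic (frequentist) setting of Srinivas et al.\ and Chowdhury--Gopalan. The argument has three parts: RKHS membership of $V_D$, the noise model, and the regret bound itself.

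\textbf{RKHS membership.} Lemma~\ref{lem:convex_kernel}(ii) gives $V_D\in C^\infty(\Theta)$ on the simplex, a compact domain with Lipschitz boundary. The RKHS of the Mat\'{e}rn-$\nu$ kernel on $\mathbb{R}^N$ is norm-equivalent to the Sobolev space $H^{\nu+N/2}(\mathbb{R}^N)$, here with $\nu=5/2$. Restricting to $\Theta$, $\mathcal{H}_\kappa(\Theta)$ is norm-equivalent to $H^{5/2+N/2}(\Theta)$, the restrictions of such functions. To put $V_D$ in this space:
\begin{enumerate}
\item Apply a Stein-type extension operator to obtain $\tilde V\in H^{5/2+N/2}(\mathbb{R}^N)$ with $\tilde V|_\Theta=V_D$. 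This needs bounded derivatives up to order $\lceil 5/2+N/2\rceil$, which $C^\infty$ with bounded derivatives supplies.
\item Set $B$ to be the resulting norm bound, namely the Sobolev norm of $\tilde V$ times the equivalence constants.
\end{enumerate}
One subtlety: the simplex is $(N-1)$-dimensional inside $\mathbb{R}^N$. I would either reparametrize onto $\mathbb{R}^{N-1}$ or extend $V_D$ smoothly off the affine hull before applying the extension operator.

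\textbf{Noise model.} By Lemma~\ref{lem:mc_variance}(iii), each observation $\widehat V_D(\theta_k)$ is $V_D(\theta_k)$ plus noise that is $R$-sub-Gaussian with $R=\sigma/\sqrt M$, conditionally on the past, since rollouts are fresh at each query. The rollout estimates the $H$-truncated return, so there is a bias of at most $\gamma^H\Rmax/(1-\gamma)$. I would either take $H$ large enough that this bias is absorbed into the constants, or define the target as the truncated value. Either way the per-step bias contributes an additive $K\gamma^H\Rmax/(1-\gamma)$, which vanishes relative to $K$ when $H$ grows with $K$.

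\textbf{Confidence bound and regret.} With $\beta_k$ as stated, the self-normalized concentration result of Chowdhury--Gopalan (Theorem~2) gives, with probability at least $1-\varepsilon$ and simultaneously for all $k$ and $\theta$,
\[
|V_D(\theta)-\mu_{k-1}(\theta)|\le\beta_k\sigma_{k-1}(\theta).
\]
The regret bound then follows in three steps:
\begin{enumerate}
\item The UCB choice of $\theta_k$ gives instantaneous regret $r_k\le 2\beta_k\sigma_{k-1}(\theta_k)$.
\item Cauchy--Schwarz and monotonicity of $\beta_k$ give $R_K\le 2\beta_K\sqrt{K\sum_k\sigma_{k-1}^2(\theta_k)}$.
\item The standard bound $\sum_k\sigma_{k-1}^2(\theta_k)=\mathcal{O}(\gamma_K)$ yields $R_K=\mathcal{O}(\beta_K\sqrt{K\gamma_K})$.
\end{enumerate}
This bound carries $\beta_K$ rather than $\sqrt{\beta_K}$, which is consistent with the stated $\mathcal{O}(\sqrt{K\beta_K\gamma_K})$ once the paper's $\beta_k$ is read as the square-root-scale parameter.

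\textbf{Conclusion and main obstacle.} The Mat\'{e}rn information gain satisfies $\gamma_K=\tilde{\mathcal{O}}\bigl(K^{N/(2\nu+N)}\bigr)$ (Vakili et al.). Then $\beta_K=\mathcal{O}(B+\sqrt{\gamma_K})$ and $\beta_K\sqrt{K\gamma_K}=o(K)$ exactly when $\gamma_K=o(\sqrt K)$, which holds when $N<2\nu=5$. For larger $N$ I would instead invoke the improved analyses that give sublinear regret for all Mat\'{e}rn smoothness levels, so that $R_K/K\to0$. Since $\min_k$ regret is at most the average regret, the best query converges in value to $V_D(\theta^\star)$.

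The main obstacle is the RKHS step: justifying a finite $B$ on the lower-dimensional simplex, and the subtle dependence of the achievable rate on $N$ relative to $\nu$. Everything else is invoked off the shelf.
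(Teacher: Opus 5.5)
Your route is the paper's: Mat\'{e}rn--Sobolev norm equivalence plus a smooth extension to get $B<\infty$, Lemma~\ref{lem:mc_variance}(iii) for the noise, the Chowdhury--Gopalan confidence set with the stated $\beta_k$, the bound $r_k\le 2\beta_k\sigma_{k-1}(\theta_k)$, Cauchy--Schwarz with the information gain, and simple regret below average regret. Where you depart from it, you are more careful, and correctly so. First, the simplex has empty interior in $\mathbb{R}^N$. Your isometric reparametrization onto $\mathbb{R}^{N-1}$ settles this cleanly, because $\kappa$ depends only on Euclidean distance, so its restriction to the affine hull is the Mat\'{e}rn-$5/2$ kernel on $\mathbb{R}^{N-1}$. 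Second, Lemma~\ref{lem:mc_variance} centers $\widehat V_D$ at the $H$-truncated return, so the paper's zero-mean noise claim is off by up to $\gamma^H\Rmax/(1-\gamma)$. Of your two fixes, the one that works is targeting the truncated value, which is itself $C^\infty$ in $\theta$ and hence in $\mathcal{H}_\kappa$. A bias cannot simply be absorbed into the confidence set without a misspecification argument, and with $H$ fixed as in Algorithm~\ref{alg:main} the average regret is only guaranteed to fall below a floor of that order. Third, you notice that $\gamma_K=o(K)$ alone does not give $R_K/K\to0$, because $\beta_K$ grows like $\sqrt{\gamma_K}$. The paper's proof skips this step.

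Two of your claims still overreach.

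\textbf{The $\beta_K$ versus $\sqrt{\beta_K}$ reconciliation runs the wrong way.} The acquisition is $\mu_k+\beta_{k+1}\sigma_k$ with $\beta_k$ already on the square-root (Chowdhury--Gopalan) scale. Your argument therefore gives $R_K=\mathcal{O}(\beta_K\sqrt{K\gamma_K})$, which is weaker than the displayed $\mathcal{O}(\sqrt{K\beta_K\gamma_K})$ by a factor $\sqrt{\beta_K}$. The displayed form belongs to the Srinivas convention $\mu+\beta^{1/2}\sigma$. So you have proved a weaker inequality than the one stated; the paper's proof merely asserts the stronger one after Cauchy--Schwarz.

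\textbf{Your plan for large $N$ does not cover Algorithm~\ref{alg:bo} as written.} The analyses giving sublinear GP-UCB regret for every Mat\'{e}rn smoothness (e.g.\ Whitehouse--Ramdas--Wu) replace the fixed regularizer $\sigma_\xi^2$ by one growing with $K$ and use a different confidence width. The other rate-optimal results concern different algorithms, such as phased elimination or batched exploration. Your argument therefore establishes $R_K/K\to0$ only when the effective dimension $N-1$ is below $2\nu=5$; after your reparametrization the relevant dimension is $N-1$, not $N$.

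The paper does not close this gap either. It only checks that Srinivas's exponent $\tfrac{(N-1)N}{5+(N-1)N}$ is below $1$. With $\beta_K\asymp\sqrt{\gamma_K}$, its own displayed bound needs $\gamma_K=o(K^{2/3})$, which that exponent fails to certify once $N\ge4$. The correct bound needs $\gamma_K=o(K^{1/2})$, which it fails to certify once $N\ge3$.

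What both arguments actually support is $R_K=\mathcal{O}(\beta_K\sqrt{K\gamma_K})$. Using the Vakili et al.\ information-gain bound, this gives convergence of the best query to $V_D(\theta^\star)$ in value (as you correctly phrase it, not in $\theta$) for $N\le5$. Covering all $N$ requires changing the regularization schedule.
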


\begin{proof}
The Mat\'{e}rn-$5/2$ kernel $\kappa$ on $\mathbb{R}^N$ has a reproducing kernel Hilbert
space norm-equivalent to the Sobolev space $H^{2.5+N/2}(\mathbb{R}^N)$. By
Lemma~\ref{lem:convex_kernel}(ii), $V_D \in C^\infty(\Theta)$ with all derivatives
bounded on the compact set $\Theta$, so $V_D$ admits a compactly supported $C^\infty$
extension $\widetilde{V}_D$ to $\mathbb{R}^N$. Any such extension lies in
$H^{2.5+N/2}(\mathbb{R}^N)$, and hence in $\mathcal{H}_\kappa$. Since the RKHS of
$\kappa$ restricted to $\Theta$ consists of the restrictions of functions in
$\mathcal{H}_\kappa$, with norm no larger than that of any extension, we obtain
$\|V_D\|_\kappa \le \|\widetilde{V}_D\|_{\mathcal{H}_\kappa} \triangleq B < \infty$.

To prove the regret bound, notice that by Lemma~\ref{lem:mc_variance}(iii), the evaluation noise
$\xi = \widehat{V}_D(\theta) - V_D(\theta)$ is $(\sigma/\sqrt{M})$-sub-Gaussian.
With $V_D \in \mathcal{H}_\kappa$ and sub-Gaussian noise, the standard conditions for
GP-UCB~\cite{srinivas2009gaussian} are satisfied. With the chosen $\beta_k$, the
confidence intervals
$|V_D(\theta) - \mu_{k-1}(\theta)| \le \beta_k\,\sigma_{k-1}(\theta)$ hold with
probability at least $1-\varepsilon$ uniformly over $\Theta$ and $k \ge 1$.

The UCB acquisition ensures that $\theta_k$ maximizes the upper confidence bound, meaning
$\mu_{k-1}(\theta_k) + \beta_k\,\sigma_{k-1}(\theta_k) \ge \mu_{k-1}(\theta^\star) + \beta_k\,\sigma_{k-1}(\theta^\star)$. 
Conditioned on the high-probability event that $V_D(\theta) \in [\mu_{k-1}(\theta) \pm \beta_k\sigma_{k-1}(\theta)]$ for all $\theta$, this gives an instantaneous regret of $r_k \le 2\beta_k\,\sigma_{k-1}(\theta_k)$.
Summing over $K$ iterations and applying the Cauchy--Schwarz inequality bounds the cumulative
predictive standard deviations by $\sqrt{K\gamma_K}$, yielding the cumulative regret
$R_K \le \mathcal{O}(\sqrt{K\beta_K\gamma_K})$.
For a Mat\'{e}rn-$5/2$ kernel on a domain of effective dimension $N-1$, the maximum information gain is bounded by $\gamma_K = \mathcal{O}\left(K^{\frac{(N-1)N}{5 + (N-1)N}} \log K\right)$
\cite{srinivas2009gaussian}. Because the exponent is strictly less than 1 for any finite $N$, $\gamma_K = o(K)$ and the average regret $R_K/K \to 0$.
Consequently, the simple regret
$S_K = \min_{k\le K}(V_D(\theta^\star) - V_D(\theta_k)) \le R_K/K \to 0$,
establishing exact asymptotic convergence to the global optimum $\theta^\star$.
\end{proof}

While Theorem \ref{thm:bo_convergence} guarantees exact convergence because $V_D \in C^\infty$, the repeated application of the chain rule through the Bellman matrix inversion $(I-\gamma P^\theta)^{-1}$ implies that the theoretical RKHS norm $B$ scales aggressively with the number of base kernels $N$.

\section{Numerical Experiments}
\label{sec:simulations}

We illustrate the practical utility of adaptive agent design in a non-Markovian setting, analyzing how structural choices over the internal kernels strongly dictate bounding limits on optimization and capacity.

Our continuous routing environment utilizes target observation delays, defining non-Markovian tracking properties mathematically.
The observation space is $\mathcal{O} = \{0, 1, 2, 3\}$ and action space is $\mathcal{A} = \{0, 1, 2\}$. When the agent is initialized or completes a cycle, the environment stochastically transitions to a true hidden target state $c \in \{0, 1, 2\}$ sampled uniformly. At the start of the sequence, the environment emits the true objective $o=c$. For all subsequent interim steps, prior to a routing junction, the environment exclusively outputs the null observation $o=3$. The agent must execute its internal transition rules recursively over the sequence to match its ultimate action $a$ against the hidden target $c$. The reward matrix $r(s,a,o)$ distributes biased, asymmetric payoffs to prevent trivial exploitation: choosing correctly yields $r(\cdot, a=0, \cdot) = 2.0$, $r(\cdot, a=1, \cdot) = 1.0$, and $r(\cdot, a=2, \cdot) = 0.5$ correspondingly. Incorrect routing actions distribute penalties of $-1.0$ or $-2.0$. An optimal agent must mathematically compress its long-horizon observation history dynamically across its available finite states.

We formulate candidate basis transition kernels $f_i(s'|s,a,o)$ mapping observations into the parametrized internal states recursively. Specifically, we set $f_0(s'=\min(0, |S|-1) | s, a, o=0) = 1.0$, $f_1(s'=\min(1, |S|-1) | s, a, o=1) = 1.0$, and $f_2(s'=\min(2, |S|-1) | s, a, o=2) = 1.0$. Null observations $o=3$ maintain the identity transition.
The adaptive agent transition model $f_\theta$ is defined as the strict convex combination of these basis candidate components: $f_\theta(s'|s,a,o) = \sum_{i=1}^K \theta_i f_i(s'|s,a,o)$ parametrized along the probability simplex.

\subsection{Landscape Geometry and Optimizer Performance}
We bound our empirical analysis utilizing $K=3$ distinct basis components ($f_0, f_1, f_2$) mapped across $|S|=2$ available internal states. 
By scanning the 2-degrees-of-freedom parametrized variables $(\theta_0, \theta_1)$ spanning $\theta \in \Delta^2$, we calculate the continuous total reward bounds evaluated iteratively across independent episodes (Fig.~\ref{fig:3d_landscape}). 

\begin{figure}[h]
    \centering
    \includegraphics[width=0.95\columnwidth]{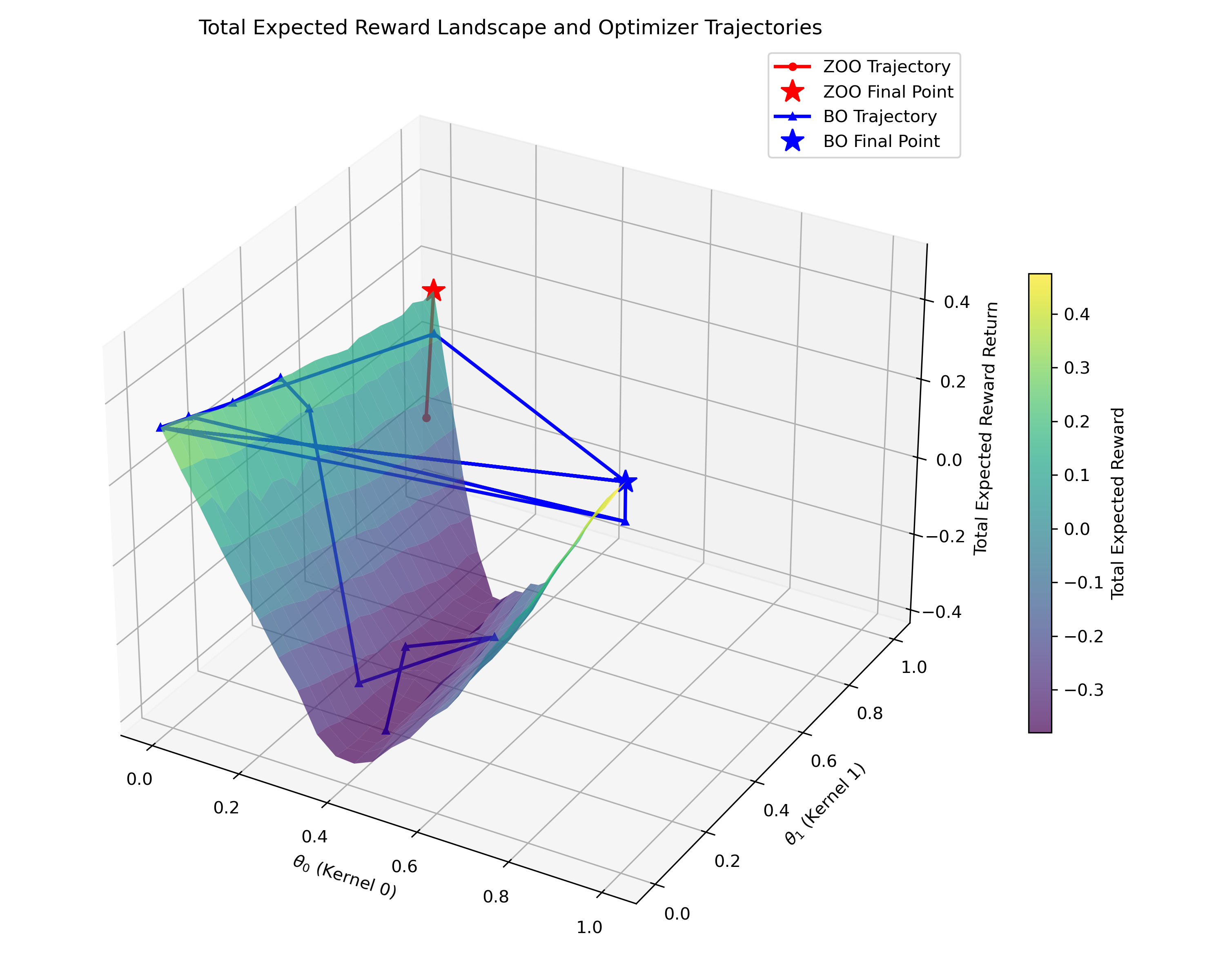}
    \caption{The evaluated total reward plot parameterized empirically over the $f_\theta$ convex combination simplex. Heavily penalized stochastic regions create strict isolation bounds.}
    \label{fig:3d_landscape}
\end{figure}

As logged in Table~\ref{tab:base_opt}, local-gradient Zeroth-Order Optimization (ZOO) actively traps within sub-optimal $f_i$ compositions. Activating global Bayesian Optimization (BO) leverages statistical exploitation to securely bypass these gradient walls, isolating the best combinatorial parameters equivalent optimally to the finite exhaustive limit calculated.

\vspace{0.2em}
\begin{table}[h]
    \centering
    \caption{Total Reward Optimizer Performance ($|S|=2$, $K=3$)}
    \begin{tabular}{lc}
        \hline
        \textbf{Method} & \textbf{Max Reward} \\ \hline
        Grid Search (Theoretical Max)  & 0.474 \\
        Bayesian Optimization (BO)     & 0.474 \\
        Zeroth-Order Ascent (ZOO)      & 0.191 \\ \hline
    \end{tabular}
    \label{tab:base_opt}
\end{table}
\vspace{0.2em}

\subsection{Impact of Kernel Parametrization}
We analyze how dynamically isolating the basis kernels $f_i$ determines the absolute global convergence maximum available structurally (Table~\ref{tab:kernels}). The base formulation of structurally divergent $f_i$ mappings provides $f_\theta$ strong capacity bounds. Reconstituting the dictionary $f_i$ completely using identically redundant copies (degenerate parameterization) uniformly eliminates all topological optimization potential, sealing maximum output equivalent strictly to identical combinations. Populating the candidate bases naively with strict independent uniform noise vectors immediately plummets the $f_\theta$ matrix directly into severely generalized negative penalties computationally independent of standard search limits, indicating the value of side information about ``good'' kernels to seed the agent design framework.

\begin{table}[h]
    \centering
    \caption{Total Reward Peak Objective across Parameter Bases}
    \begin{tabular}{lc}
        \hline
        \textbf{Kernel Component Matrices} & \textbf{Max Reward} \\ \hline
        Orthogonal Baseline Candidates  & 0.474 \\
        Identical Degenerate Mappings   & 0.462 \\
        Uniform Pure Noise Kernels      & -0.453 \\ \hline
    \end{tabular}
    \label{tab:kernels}
\end{table}

\subsection{Role of Exploratory Behavioral Policy}
Data collection fundamentally structures algorithm evaluation validity. Utilizing biased or deterministic behavioral selection mappings ($\pi_b$) drastically minimizes the state-action data trajectory breadth compiled to the discrete inner-loop planner buffer (Table~\ref{tab:policy}), algorithmically crippling the soft $Q$-learning estimates regardless of optimum candidate kernels $f_i$. Applying identically uniform random exploration naively secures standard global bounds across limited dimensionalities inherently.

\begin{table}[h]
    \centering
    \caption{Effect of Directed Trajectory Policy $\pi_b$}
    \begin{tabular}{lc}
        \hline
        \textbf{Action Sampling Bias} & \textbf{Max Reward} \\ \hline
        Uniform ($0.33, 0.33, 0.33$)      & 0.489 \\
        Minor Metric Bias ($0.6, 0.2, 0.2$) & 0.469 \\
        Severe Trajectory Bias ($0.9, 0.05, 0.05$) & 0.448 \\ \hline
    \end{tabular}
    \label{tab:policy}
\end{table}

\subsection{State Capacity Bounds}
Finally, we illustrate mathematically how limiting the generalized internal memory cardinality bounds total configuration optimization potential irrespective of available $f_i$ elements iteratively applied. By evaluating $f_\theta = \sum_{i=1}^3 \theta_i f_i$ across sequentially scaling dimensions of available internal agent state sets $|S|$, we directly plot structural memory ceilings. Formally expanding the available capacity limits maps a direct continuous scaling improvement metric optimally over $\theta$ bounds (Fig.~\ref{fig:capacity}). The total reward generated strictly climbs linearly as the matrix expands functionality, precisely matching limits and completely flattening out for bounds $|S| > 3$ exactly as logically dictated by satisfying the true native 3-environment targets fully securely. Additional dimensionality provides objectively zero generalized mathematical tracking progression.

\begin{figure}[H]
    \centering
    \includegraphics[width=0.8\columnwidth]{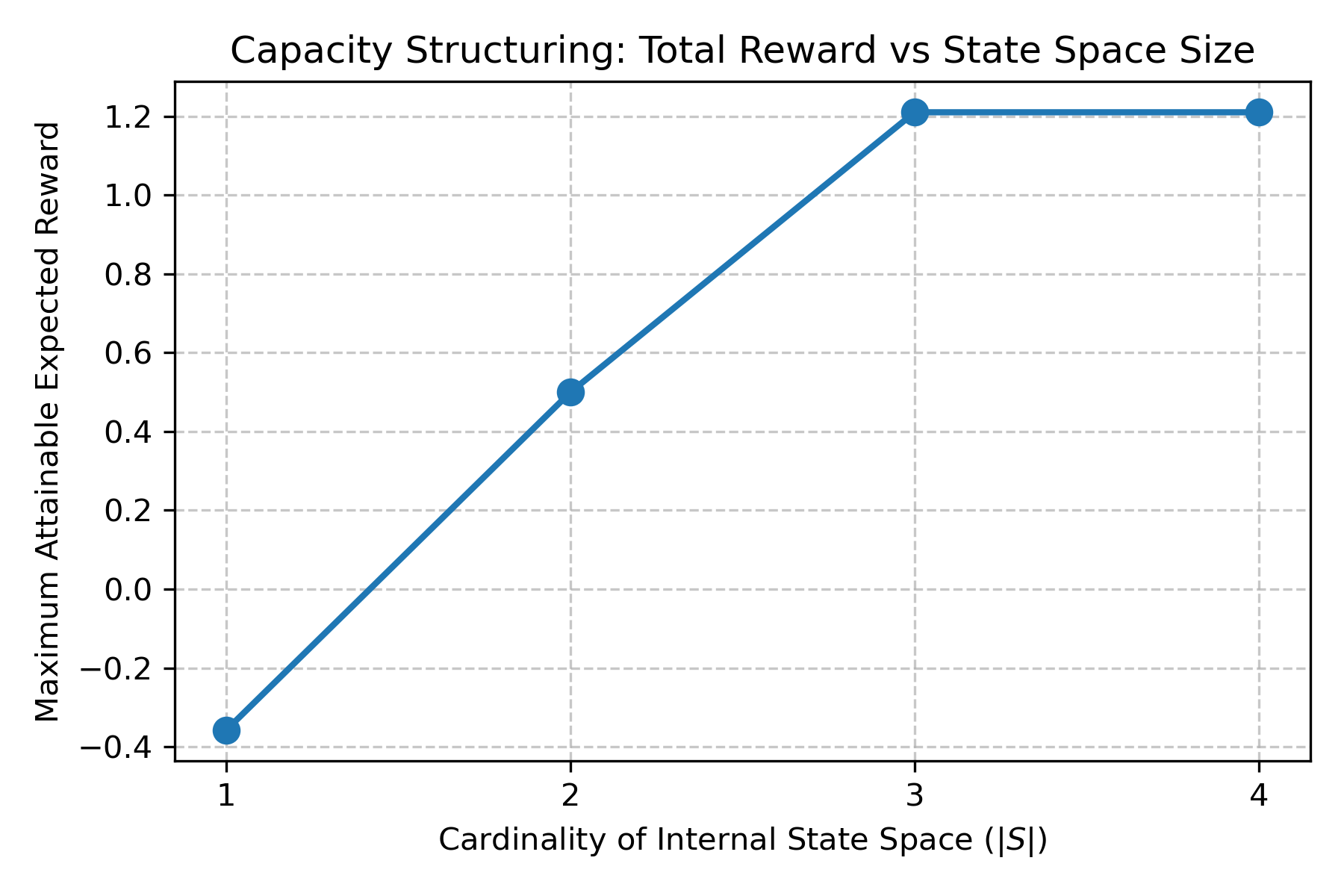}
    \caption{Total empirical expected reward evaluated monotonically maximizing iteratively as the cardinality of the internal state space linearly expands to encompass the correct environment requirements inherently.}
    \label{fig:capacity}
\end{figure}

\section{Conclusions}
\label{sec:conclusion}
In this paper, we considered the question of an agent optimizing its own transition kernel among a collection of agent states affected through its actions, playing against an unknown possibly non-Markovian environment. Given a parametrized set of transition kernels, we showed that for every choice of that parameter, a soft $Q$-iteration converges almost surely to the fixed point of a soft Bellman equation built from the stationary averages that the behavioral policy induces, and we delineated what separates the policy it yields from an optimal one. Then, we considered the question of optimizing the parameter choice through zeroth-order optimization and Bayesian optimization methods. For POMDP environments, we obtained convergence to stationary points and global sub-linear regret, respectively.

There are several interesting directions for future research. Our current study only considers the case where the dataset obtained via a behavioral policy remains fixed throughout. When the agent is learning to act against an unknown environment, it can leverage the flexibility of using newly learned policies to collect further data. We want to analyze the performance of such an adaptive agent. In our analysis, the agent keeps the cardinality of the agent state space constant. In practice, the agent can even tweak that cardinality over time. We are keen to analyze the performance of such adaptation. Finally, our work only considers a simple agent acting against an unknown but stationary non-Markovian environment. We hope to study interactions between two or more such adaptive agents playing against a known/unknown dynamic environment. This takes an alternative view point for agents acting in partial information environment as opposed to the \emph{common information} based approach \cite{nayyar2013common}, information compression \cite{mao2020information} or persuasion \cite{velicheti2025value, velicheti2023strategic, velicheti2024learning, velicheti2025harnessing, velichetilearning}.


\bibliographystyle{alpha}
\bibliography{references}

\end{document}